\newcommand{\cc}[1]{\mathcal{#1}}
\newcommand{\ket}[1]{| #1 \rangle}
\newcommand{\bra}[1]{\langle #1 |}
\def\>{\rangle}
\def\<{\langle}
\newcommand{\proj}[1]{| #1 \rangle   \langle #1 |}
\renewcommand{\rho}{\varrho}
\def\tr{{\rm Tr}}
\def\pf{{\rm Pf}}
\def\ii{{\rm i}}
\def\conv{{\rm conv}}
\def\textbf#1{{\bf #1}}
\newcommand{\Cx}{\mathbbm{C}}
\newcommand{\Rl}{\mathbbm{R}}
\newcommand{\bb}[1]{\mathbf{#1}}
\newtheorem{lemma}{Lemma}
\newtheorem{theorem}{Theorem}
\newtheorem{proposition}{Proposition}
\newtheorem{definition}{Definition}
\newtheorem{corollary}{Corollary}
\newtheorem{program}{Program}
\newcounter{rem}
\newenvironment{remark}{ \emph{Remark \arabic{rem}:}}{\medskip\addtocounter{rem}{1}}
\def\beq{\begin{equation}}
\def\eeq{\end{equation}}
\def\be{\begin{equation}}
\def\ee{\end{equation}}
\def\ben{\begin{eqnarray}}
\def\een{\end{eqnarray}}
\def\beqa{\begin{eqnarray}}
\def\eeqa{\end{eqnarray}}
\def\eea{\end{array}}
\def\bea{\begin{array}}
\newcommand{\bei}{\begin{itemize}}
\newcommand{\eei}{\end{itemize}}
\newcommand{\bee}{\begin{enumerate}}
\newcommand{\eee}{\end{enumerate}}
\def\bep{\begin{proposition}}
\def\eep{\end{proposition}}
\def\bel{\begin{lemma}}
\def\eel{\end{lemma}}
\def\bet{\begin{theorem}}
\def\eet{\end{theorem}}
\def\bed{\begin{definition}}
\def\eed{\end{definition}}
\begin{document}

\title[The Power of Noisy Fermionic Quantum Computation]{The Power of Noisy Fermionic Quantum Computation}
\author{Fernando de Melo$^1$, Piotr \'{C}wikli\'{n}ski$^2$, Barbara M. Terhal$^2$}
\address{$^1$Centrum Wiskunde \& Informatica,\\ Science Park 123, 1098 XG Amsterdam, The Netherlands\\
$^2$Institute for Quantum Information, RWTH Aachen University,\\ 52056 Aachen, Germany}
\eads{\mailto{melo@cwi.nl}, \mailto{cwiklinski@physik.rwth-aachen.de}, \mailto{terhal@physik.rwth-aachen.de}}

\date{\today}
\begin{abstract}
We consider the realization of universal quantum computation through braiding of Majorana fermions supplemented by unprotected preparation of noisy ancillae. It has been shown by Bravyi~[\emph{Phys.~Rev.~A} \textbf{73}, 042313 (2006)] that under the assumption of perfect braiding operations, universal quantum computation is possible if the noise rate on a particular 4-fermion ancilla is below $40\%$. We show that beyond a noise rate of $89\%$ on this ancilla the quantum computation can be efficiently simulated classically: we explicitly show that the noisy ancilla is a convex mixture of Gaussian fermionic states in this region, while for noise rates below $53\%$ we prove that the state is not a mixture of Gaussian states. These results are obtained by generalizing concepts in entanglement theory to the setting of Gaussian states and their convex mixtures. In particular we develop a  set of criteria, namely the existence of a Gaussian-symmetric extension, which determine whether a state is a convex mixture of Gaussian states.
\end{abstract}

%\maketitle

%\tableofcontents

%%%%%%%%%%%%%%%%%%%%%%%%%%%%
% Introduction
%%%%%%%%%%%%%%%%%%%%%
\section{Introduction}

One interesting route towards universal quantum computation is through the realization of Majorana fermion qubits. Such Majorana fermion qubits, encoded in pairs of nonlocal fermionic zero-energy modes, are believed to be present in various quantum systems such as a $\nu=5/2$ fractional Quantum Hall system, $p_x+\ii p_y$ superconductors, and recently proposed topological insulator/superconductor and semiconducting nanowires/superconductor structures (see~\cite{alicea:review} for a review). Braiding operations on such Majorana fermion qubits can implement certain topologically-protected gates, namely single-qubit Clifford gates~\cite{bravyi:uni_maj,Ahlbrecht09}. A system of Majorana fermions initially prepared in a fermionic Gaussian state (see definition below) and undergoing braiding operations (which are a subset of non-interacting fermion operations~\footnote{Non-interacting fermionic operations are unitary transformations generated by quadratic Hamiltonians on the creation and annihilation operators of  fermionic modes. Such Hamiltonians can always be block-diagonalized, and in this new basis the different fermionic modes do not interact.}) can be classically efficiently simulated if it is not supplemented by additional resources. Universal quantum computation can be achieved if this supplement consists, for example, of either (i) gates which use quartic interaction between Majorana fermions~\cite{Bravyi2002210} or (ii) a quartic parity measurement~\cite{Beenakker04} or (iii) two ancillae in certain special states $\ket{a_4}$ and $\ket{a_8}$, involving respectively 4 and 8 Majorana fermions~\cite{bravyi:uni_maj}. The advantage of the last realization is that even when these ancillae are noisy, one can purify them using braiding operations. This distillation of clean ancillae states, which are then used to implement the required gates for universality, is similar to the magic-state-distillation scheme by which Clifford group operations are used to distill almost noise-free single qubit $\pi/8$ ancillae from noisy ones~\cite{BK:magicdistill}. For the model of computation based on braiding of Majorana fermions, distilled  $\ket{a_4}$ states give the ability to perform  single quibt $\pi/8$ gates, while distilled  $\ket{a_8}$ states allow for the implementation of  $CNOT$ gates~\cite{bravyi:uni_maj}. The noise threshold for such distillation schemes, which assume that the distilling gates and operations are noise-free, is of the order of tens of percents which makes them attractive. But one can also ask the converse question: how noisy are these ancillae allowed to be before one can efficiently classically simulate the entire quantum computation? This question has been addressed in the case of noisy single-qubit magic states and noisy single-qubit gates \cite{reichardt:distill, buhrman+:FT,DK:tight}. 

In this paper we consider a similar question for computation using Majorana fermions. It is not hard to show (see Section~\ref{sec:discussion}) that if the noisy ancillae are convex mixtures of Gaussian fermionic states and the computation involves only non-interacting fermionic operations, one can still classically efficiently simulate such quantum computation. Hence we set out to develop a criterion that determines whether a state is a convex mixture of Gaussian fermionic states in Section~\ref{sec:GaussDeco}. We find such a criterion in the form of a hierarchy of semidefinite programs, similar as for separable states \cite{DPS:prl}. Unfortunately, the computational effort for implementing this general criterion is too large to give decisive information for our problem of interest and we have recourse to an analytical approach for the noisy $\ket{a_8}$ ancilla in Section~\ref{sec:appl}. Even though our general criterion is not immediately useful for the problem at hand, its generality and similarity with separability criteria makes it interesting in itself.

Our work is different from previous work on criteria which determine whether a fermionic state with a fixed number of fermions has a single Slater determinant, or whether a fermionic state can be written as a convex combination of states with single Slater determinant \cite{SLM:slater_det, eckert+:indis}. The important distinction is that we fix only the parity of the fermionic state and not the number of fermions. Our goal is to extend the class of Gaussian fermionic states in a natural way, by considering states which can be written as convex combinations of Gaussian states. We  call such states convex-Gaussian or `having a Gaussian decomposition'. Even though physical states have a fixed number of fermions, Gaussian fermionic states are important approximations to physically non-trivial states, such as the superconducting BCS state. Our criterion thus intends to separate Gaussian states and convex combinations thereof, from fermionic states with a richer structure which cannot be simply viewed as states in which fermions are paired \footnote{Any Gaussian fermionic state with an even number of fermions can be brought, by fermion-number preserving quadratic interactions, to a normal form which is a superposition of states with fixed pairs of fermions created from the vacuum, see e.g. \cite{kraus:thesis}.}. Note that the question of having a Gaussian decomposition is also different from the question of {\em pairing} which is analyzed in \cite{kraus:pairing}: Gaussian fermionic states can be paired while single Slater determinant states cannot. We hope that our results in separating convex-Gaussian states from fermionic states with a richer structure may lead to tools for understanding ground states of interacting fermion systems beyond mean-field theory.

%%%%%%%%%%%%%%%%%%%%%%%%%%%%%%%%%
%%%%%% Preliminaries
%%%%%%%%%%%%%%%%%%%%%%%%%%%%%%%

\section{Preliminaries: Fermionic Gaussian states}
\label{sec:FermionGauss}

We consider a system of $m$ fermionic modes, with corresponding creation ($a^\dagger_k$) and annihilation ($a_k$) operators ($k=1,\ldots,m$), respecting the Fermi-Dirac anti-commutation rules,  $\{a_j,a_k\}=0$ and $\{a_j, a^\dagger_k\}=\delta_{jk} I$. For systems in which fermionic parity is conserved, it is more convenient to use the $2m$ Majorana fermion operators defined as $c_{2k-1}=a_k+a_k^{\dagger}$ and $c_{2k}=-\ii(a_k^\dagger-a_k)$. The operators $\{c_i\}_{i=1}^{2m}$ are Hermitian, traceless and form a Clifford algebra $\cc{C}_{2m}$ with $\{c_j,c_k\}=2 \delta_{jk}I$~\footnote{A system of $m$ qubits is isomorphic with a system with $m$ fermions and the unitary Jordan-Wigner transformation maps each Majorana fermion operator onto a nonlocal product of Pauli operators acting on $m$ qubits.}.

Any Hermitian operator $X\in \cc C_{2m}$  that can be written as the linear combination of products of an even number of Majorana operators is called an \emph{even} operator, i.e. 
\begin{equation}
X=\alpha_0 I +\sum_{k=1}^{m} \ii^k \sum_{1 \leq a_1 < a_2 < \ldots < a_{2k} \leq 2m} \alpha_{a_1,a_2,\ldots,a_{2k}} c_{a_1} c_{a_2} \ldots c_{a_{2k}},
\label{anyX}
\end{equation}
where the coefficients $\alpha_0$ and $\alpha_{a_1,a_2,\ldots,a_{2k}}$ are real, for $(c_{a_1} c_{a_2} \ldots c_{a_{2k}})^{\dagger}=(-1)^k c_{a_1} c_{a_2} \ldots c_{a_{2k}}$. The parity of the number of fermions is conserved by the action of an even operator $X$ as $X$ commutes with the fermionic number-parity operator $C_{\rm all}=\ii^{m} c_1 c_2 \ldots c_{2m} = \prod_{i=1}^m(I -2 a_i^\dagger a_i)$. Thus the projector onto a pure state $\ket{\psi}$ with {\em fixed} parity is an even fermionic operator, while $\ket{\psi}$ has eigenvalues $C_{\rm all}=\pm 1$ depending on whether the parity of the number of fermions in $\ket{\psi}$ is odd or even.

Given an even state $\rho \in \cc C_{2m}$, the correlation matrix $M$ is a $2 m \times 2m$ real, anti-symmetric matrix with elements 
\beq
M_{ab} := \frac{\ii}{2} \tr (\rho[c_a, c_b]),\;\; \text{with }\;\; a,b=1,\ldots, 2m.
\label{def:M}
\eeq
Real, anti-symmetric matrices such as $M$ can be brought to block-diagonal form by a real orthogonal transformation $R \in SO(2m)$, i.e. 
\begin{equation}
M= R \;\bigoplus_{j=1}^m \left(\begin{array}{cc} 0 & \lambda_j \\  -\lambda_j & 0 \end{array} \right) R^T.
\label{blockdiag}
\end{equation}

Let us now define fermionic Gaussian states. Fermionic Gaussian states $\rho$ are even states of the form $\rho=K \exp(-\ii \sum_{i \neq j} \beta_{ij} c_i c_j)$ with real anti-symmetric matrix $\beta_{ij}$ and normalization $K$. Hence Gaussian fermionic states are ground-states and thermal states of non-interacting fermion systems. One can block-diagonalize $\beta$ and re-express $\rho$ in {\em standard form} as 
\begin{equation}
\rho=\frac{1}{2^m} \prod_{k=1}^m \chi_k, \;\;{\rm where }\;\; \chi_k=I+\ii \lambda_k \tilde{c}_{2k-1} \tilde{c}_{2k},\,
\label{sform}
\end{equation}
where $\tilde{c}=R^T c$ with $R$ block-diagonalizing the matrix $\beta_{ij}$. The coefficients $\lambda_j$ (which can be expressed in terms of the eigenvalues of $\beta_{ij}$) will lie in the interval $[-1,1]$. For Gaussian pure states, $\lambda_j \in \{-1,1\}$ so that $M^T M=I$, while for Gaussian mixed states $M^T M < I$. In the theory of Gaussian fermionic states, a special role is played by those unitary transformations which map Gaussian states onto Gaussian states. These are the transformations generated by Hamiltonians of non-interacting fermionic systems, i.e. Hamiltonians which are quadratic in Majorana fermion operators. In this paper we will refer to these unitary transformations as fermionic linear optics (FLO) transformations, as they, similarly as for bosonic linear optics transformations,  have the property that 
\begin{equation}
U c_i U^{\dagger}=\sum_{j} R_{ij} c_j,
\label{eq:FLOrule}
\end{equation}
where $U$ is a FLO transformation and $R \in SO(2m)$. Hence the transformations which block-diagonalize $M$ (or $\beta$) are FLO transformations. Note that $C_{\rm all}$ is invariant under any unitary FLO transformation $U$, as $U C_{\rm all}=C_{\rm all} U$.

It is clear from the standard form, Eq.(\ref{sform}), that a fermionic state is fully determined by its correlation matrix $M$. The expectation of higher order correlators for a Gaussian fermionic state are efficiently obtained by virtue of Wick's theorem
\beq
 \tr(\rho\, c_{a_1} c_{a_2}\ldots c_{a_{2k}}) = \ii^{-k} \pf(M|_{a_1,\ldots, a_{2k}}),
\label{wick}
\eeq
where $M|_{a_1,\ldots, a_{2p}}$ is the sub-matrix of $M$ which contains only the elements $M_{jk}$ with $j,k=a_1,\ldots, a_{2p}$ and $\pf(.)$ is the Pfaffian~\footnote{Let $A=(a_{i,j})$ be a $2m\times 2m$ anti-symmetric matrix, then $\pf(A) = \frac{1}{2^m m!}\sum_{\pi\in  S_{2m}} {\rm sgn}(\pi)\prod_{i=1}^{m} a_{\pi(2i-1),\pi(2i)}$, where $S_{2m}$ is the set of all permutations of $2m$ symbols. The Pfaffian is non-zero only for anti-symmetric matrix of even dimension. }. 

Their efficient description~(\ref{def:M}) combined with the possibility to efficiently evaluate the expectation value of observables~(\ref{wick}) makes fermionic Gaussian states a valuable  tool for {\em approximating} ground-states, thermal states or dynamically-generated states using (generalized) Hartree-Fock methods of interacting fermion systems, see e.g. \cite{bach94,kraus10,kraus:thesis, kittel:qt_solids}. Their concise representation together with Eq.(\ref{eq:FLOrule}) also allows for an efficient classical simulation of quantum computations that employ only FLO operations and are initialized with Gaussian states~\cite{BarbaraFLO, bravyi:FLO,JozsaMiyake, BravyiKoenig}.

%%%%%%%%%%%%%%%%%%%%%%%%%%%%%%%%%
% Gaussian decomposition
%%%%%%%%%%%%%%%%%%%%%%%%%%%%%

\section{Beyond Gaussian states:  Gaussian decompositions}
\label{sec:GaussDeco}

Given all the applications of Gaussian states, it is natural to try to enlarge this set of states to form a convex set, in analogy with extending product states to separable states. A first observation in this direction is that mixed Gaussian states are straightforward convex mixtures of pure Gaussian states.  This is easily seen using the standard form~(\ref{sform}), i.e.,  any Gaussian state can be written as  $\rho=\prod_{k=1}^m \chi_k/2^m$ with $\chi_k=I+\ii (2p_k-1) c_{2k-1}c_{2k}=p_k (I+\ii c_{2k-1}c_{2k})+(1-p_k)(I-\ii c_{2k-1} c_{2k})$ where $0\le p_k\le 1$ (by setting $p_k=1/2$ we obtain the state $I/2^m$). However, the convex mixture of two Gaussian states is  in general not a Gaussian state~\cite{wolfe75,MarkBook}. This motivates the following definition

\begin{definition}[Convex-Gaussian]
An even density matrix  $\rho\in\cc C_{2m}$ is convex-Gaussian iff it can be written as a convex combination of pure Gaussian states in $\cc C_{2m}$, i.e. $\rho=\sum_i p_i \sigma_i$ where $\sigma_i$ are pure Gaussian states, $p_i \geq 0$ and $\sum_i p_i=1$.
\end{definition}

\begin{remark} Since every even state in $\cc C_{2m}$ is defined by $2^{2m}-1$ real coefficients, as is clear from Eq.~(\ref{anyX}) fixing the normalization $\alpha_0=1/2^m$, Carath\'{e}odory's theorem guarantees us that for any convex-Gaussian state there exists a Gaussian decomposition with at most $2^{2m}$ pure Gaussian states.
\end{remark}

\begin{remark} There is both similarity and difference between the set of separable states and the set of convex-Gaussian states. For example, for separable states local unitary transformations captured by $O(m)$ parameters relate the extreme points of the set, whereas for convex-Gaussian states the extreme points are the pure Gaussian states which are related by FLO transformations, captured by $O(m^2)$ parameters. Note that the pure states $\sigma_i$ in decomposition of a convex-Gaussian state will typically have a correlation matrix $M(\sigma_i)$ which is block-diagonal in a different $\{c_i\}$ basis. If we were to restrict ourselves to convex combinations of Gaussian states which are in standard form using one and the same set $\{c_i\}$, we would obtain a convex set isomorphic to the set of separable states diagonal in the classical bit-string basis.
\end{remark}

We will now prove that for systems of $1,2$ or $3$ fermionic modes, all  pure even states are Gaussian (a different proof of this result can be found in~\cite{bravyi:fermchannels}). From this observation it is then immediate that any even state $\rho \in \cc C_{2m}$, with $m=1,2,3$, is convex-Gaussian.   In order to prove this result we start with a useful known Lemma, reproduced here for completeness, which shows that any fermionic even density matrix has a correlation matrix with eigenvalues in the complex $[-\ii,\ii]$ interval. 

\begin{lemma}
The antisymmetric correlation matrix $M$ of any even density matrix $\rho\in \cc C_{2m}$ has eigenvalues $\pm \ii \lambda_k$ with $\lambda_k \in [-1,1]$ and $k\in \{1,...,m\}$.  We have $M^T M \leq I$ with equality if and only if $\rho$ is a Gaussian pure state. 
\label{lem:M}
\end{lemma}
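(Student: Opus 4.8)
The plan is to diagonalize $M$ by an $SO(2m)$ rotation that is simultaneously an FLO transformation, so that the whole statement reduces to a one-mode computation. Since $M$ is real and antisymmetric, its spectrum consists of conjugate pairs $\pm\ii\lambda_k$ with $\lambda_k\in\Rl$; and if $M=RDR^T$ is the block decomposition of Eq.~(\ref{blockdiag}), then $M^TM=RD^TDR^T$ with $D^TD$ diagonal, its entries being the $\lambda_k^2$ (each twice). Hence ``$\lambda_k\in[-1,1]$ for all $k$'' is equivalent to ``$M^TM\le I$'', and $M^TM=I$ holds precisely when $\lambda_k^2=1$ for every $k$; it therefore suffices to prove $|\lambda_k|\le1$ and to identify the case $\lambda_k^2=1$.

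Next I would use Eq.~(\ref{eq:FLOrule}): the rotation $R$ of Eq.~(\ref{blockdiag}) is implemented by an FLO unitary $U$ (the correlation matrix transforming by conjugation with $R$), so $\rho':=U\rho U^\dagger$ is again an even density matrix whose correlation matrix $M'$ equals the block-diagonal $D$; in particular $M'_{2k-1,2k}=\lambda_k$. Inserting the anticommutation identity $[c_{2k-1},c_{2k}]=2c_{2k-1}c_{2k}$ into the definition~(\ref{def:M}) gives $\lambda_k=\tr(\rho'A_k)$ with $A_k:=\ii\,c_{2k-1}c_{2k}$. A one-line check using $c_j^2=I$ and $\{c_{2k-1},c_{2k}\}=0$ shows $A_k=A_k^\dagger$ and $A_k^2=I$, so $A_k$ has eigenvalues $\pm1$ and therefore $|\lambda_k|=|\tr(\rho'A_k)|\le\|A_k\|=1$ since $\rho'$ is a density matrix. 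This proves $M^TM\le I$.

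For the equality case assume $M^TM=I$, so $\lambda_k\in\{-1,1\}$ and $\tr(\rho'A_k)=\lambda_k$ for every $k$. Let $\Pi_k:=\frac{1}{2}(I+\lambda_kA_k)$ be the spectral projector of $A_k$ onto its $\lambda_k$-eigenspace. From $\tr(\rho'A_k)=\lambda_k$, $\tr\rho'=1$ and $\rho'\ge0$ one deduces $\tr(\rho'(I-\Pi_k))=0$, hence $\rho'=\Pi_k\rho'\Pi_k$, i.e.\ $\rho'$ is supported on the range of $\Pi_k$. Since the $A_k$ involve disjoint Majorana modes they mutually commute, so the $\Pi_k$ commute and $\Pi:=\prod_{k=1}^m\Pi_k$ is a projector; a rank count --- $m$ mutually commuting, $\pm1$-valued operators generating a maximal abelian subalgebra of $\cc C_{2m}$ on a $2^m$-dimensional space --- shows $\Pi$ has rank one. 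As $\rho'\ge0$, $\tr\rho'=1$ and $\rho'$ is supported on the one-dimensional range of $\Pi$, we conclude $\rho'=\Pi=\frac{1}{2^m}\prod_{k=1}^m(I+\ii\lambda_k c_{2k-1}c_{2k})$, which is exactly the standard form~(\ref{sform}) of a pure Gaussian state; as FLO transformations preserve Gaussianity, $\rho=U^\dagger\rho'U$ is pure Gaussian as well. The converse is immediate from~(\ref{sform}): a pure Gaussian state has $\lambda_k\in\{-1,1\}$, so $M^TM=I$.

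I expect the only genuinely delicate point to be the rank count in the equality case --- the assertion that the product of the $m$ commuting rank-$2^{m-1}$ projectors $\Pi_k$ has rank one, which is what pins $\rho'$ down to a pure state; everything else is bookkeeping with the Clifford relations. As an alternative to the block-diagonalization in the inequality part: for any $v\in\Rl^{2m}$ the Majorana combination $c(v):=\sum_a v_a c_a$ satisfies $c(v)^2=|v|^2 I$, and taking $u:=Mv/|Mv|$, which is orthogonal to $v$ by antisymmetry of $M$, one finds that $\ii\,c(u)c(v)$ is Hermitian with operator norm $|v|$ while $|Mv|=\la u,Mv\ra=\tr(\rho\,\ii\,c(u)c(v))$, whence $|Mv|\le|v|$, i.e.\ $M^TM\le I$.
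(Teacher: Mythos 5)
Your proof is correct, and it splits naturally into two halves that compare differently with the paper's. The inequality part is essentially the paper's first argument: block-diagonalize $M$ by an FLO-induced rotation and read off $\lambda_k$ as the expectation of the Hermitian, unit-norm operator $\ii\,\tilde c_{2k-1}\tilde c_{2k}$ (your rotation-free variant with $c(v)=\sum_a v_a c_a$ is a nice addition but proves the same thing the same way in spirit). The equality case is where you genuinely diverge. The paper proves it by a dephasing argument borrowed from \cite{WW:clifford}: iteratively averaging $\rho$ with $U_k\rho U_k^\dagger$ produces a state $\rho_m$ diagonal in the joint eigenbasis of the $\ii c_{2k-1}c_{2k}$ with the same correlation matrix and non-decreased entropy, so $M^TM=I$ forces $\rho_m$, and hence $\rho$, to be the pure Gaussian extreme point. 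You instead argue directly that saturation $\tr(\rho' A_k)=\lambda_k=\pm1$ together with $\rho'\ge 0$ pins the support of $\rho'$ inside each spectral projector $\Pi_k$, and that $\Pi=\prod_k\Pi_k$ has rank one. The one step you flag as delicate is indeed the only one needing a word, and it is easily closed: $\Pi$ is a product of commuting projectors, hence a projector, and $\tr\Pi=\tr\bigl[2^{-m}\prod_k(I+\lambda_k A_k)\bigr]=1$ because every nontrivial product of Majorana operators is traceless; a trace-one projector has rank one. Your route is more elementary and self-contained; the paper's dephasing route is slightly more roundabout here but pays off later, since the same dephasing map is reused verbatim in the proof of Proposition~\ref{prop:123Gauss}, which your argument would not directly supply.
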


\begin{proof}
We present two proofs. The correlation matrix of an even density matrix is an antisymmetric matrix, and hence it can be block-diagonalized as in Eq.(\ref{blockdiag}). Its eigenvalues are thus $\pm \ii \lambda_k$ and $\lambda_k \in [-1,1]$, as from Eq.(\ref{def:M}) $\lambda_k$ is the expectation of the Hermitian operator $\ii \tilde{c}_{2k-1} \tilde{c}_{2k}$ which has $\pm 1$ eigenvalues. We can also prove the Lemma by using a dephasing approach introduced in~\cite{WW:clifford}. We provide this alternative proof as it clearly shows that $M^T M=I$ iff $\rho$ is a pure Gaussian state, and  because we later use elements of this proof in Proposition \ref{prop:123Gauss}. Let $\rho\in \cc C_{2m}$ be any density matrix and let $\{c_i\}_{i=1}^{2m}$ be the set of Majorana fermions in which $M$ is block-diagonal (if the choice for $\{c_i\}$ is not unique, pick one). Define the FLO transformation $U_k$, $k=1,\ldots,m$ as 
\begin{eqnarray}
U_k c_{2k} U_k^{\dagger}=-c_{2k},\; U_k c_{2k-1} U_k^{\dagger}=-c_{2k-1}, \; U_k c_i U_k^{\dagger}=c_i\;\; \forall i\neq 2 k, 2k-1.
\end{eqnarray} 
Note that the FLO transformation  $U_k$ induces an orthogonal transformation $R_k$ that leaves the  correlation matrix $M$ of $\rho$ invariant. Let $\rho_k =\frac{1}{2}(\rho_{k-1} +U_k \rho_{k-1} U^\dagger_k)$, with $\rho_0 = \rho$. The iteration in $k$  has the effect of dephasing $\rho$ in the eigenbasis of each $\ii c_{2k-1} c_{2k}$. Thus $\rho_m$ is a density matrix which contains only the mutually commuting operators $\ii c_{2k-1}c_{2k}$, and hence its eigendecomposition involves the eigenstates of these operators which are Gaussian pure states. Note that $\rho_m$ is not necessarily a Gaussian state, but its eigenstates are Gaussian pure states, each of which has correlation matrix block-diagonal in the same basis. This implies that the correlation matrix $M$ is that of a Gaussian mixed state.  Note that the dephasing procedure can only increase the entropy of $\rho$. Therefore, if $M^T M=I$ at the end of this procedure, then the state at the end of the procedure, $\rho_m$, must be a pure Gaussian state. By the same token, $\rho$ at the beginning of the procedure (with the same $M^T M=I$) must be pure Gaussian. Thus $M^T M=I$ iff $\rho$ is a pure Gaussian state.
\end{proof}

\begin{proposition}
Any even pure state $\proj{\psi}\in \cc C_{2m}$ for $m=1, 2, 3$ is Gaussian.
\label{prop:123Gauss}
\end{proposition}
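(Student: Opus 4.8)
The plan is to show that any such $\ket{\psi}$ must be a common eigenstate of the commuting bilinears $\ii c_{2k-1}c_{2k}$ in the Majorana basis $\{c_i\}$ that block-diagonalises its correlation matrix $M$; this is the same as saying that all the $\lambda_k$ equal $\pm 1$, i.e.\ $M^TM=I$, whence Lemma~\ref{lem:M} (or directly the standard form~(\ref{sform})) yields that $\ket{\psi}$ is Gaussian. First I would observe that, since $\proj{\psi}$ is an even operator, it commutes with $C_{\rm all}$; applying both sides of $C_{\rm all}\proj{\psi}=\proj{\psi}C_{\rm all}$ to $\ket{\psi}$ shows that $\ket{\psi}$ is itself an eigenstate of $C_{\rm all}$, and since $C_{\rm all}^2=I$ the eigenvalue is $\pm 1$. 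Assume $C_{\rm all}\ket{\psi}=\ket{\psi}$ (the case $-1$ is entirely analogous), so that $\ket{\psi}$ lives in the $2^{m-1}$-dimensional eigenspace $\cc{H}_+$. After an FLO transformation we may assume $M$ is block-diagonal, so that $\ii c_{2k-1}c_{2k}$ has expectation $\lambda_k$ in $\ket{\psi}$ while every other Majorana bilinear has vanishing expectation.

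The point that makes $m\le 3$ special is the following. On $\cc{H}_+$ the operator $C_{\rm all}=\ii^m c_1 c_2\cdots c_{2m}$ acts as the identity, so a product of $2k$ Majorana operators and the product of the $2m-2k$ complementary ones agree, up to a fixed phase, once restricted to $\cc{H}_+$. For $m=1,2,3$ every even monomial $c_{a_1}\cdots c_{a_{2k}}$ has $2k\le 2$ or $2m-2k\le 2$, hence on $\cc{H}_+$ it reduces, up to a phase, to the identity or to a single bilinear $\ii c_a c_b$. Expanding $\proj{\psi}$ as in~(\ref{anyX}) and restricting to $\cc{H}_+$ therefore yields $\proj{\psi}\big|_{\cc{H}_+}=\beta_0\,\idty+\sum_{a<b}\beta_{ab}\,(\ii c_a c_b)\big|_{\cc{H}_+}$, and a short computation using~(\ref{def:M}) together with $C_{\rm all}\ket{\psi}=\ket{\psi}$ shows that the coefficient multiplying a given bilinear receives contributions only from $M$-entries belonging to the same $2\times 2$ block. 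Since $M$ is block-diagonal, only the terms $\ii c_{2k-1}c_{2k}$ survive: $\proj{\psi}\big|_{\cc{H}_+}=\beta_0\,\idty+\sum_{k=1}^m \beta_k\,A_k$, with real coefficients $\beta_k$ determined by the $\lambda_j$, where $A_k:=(\ii c_{2k-1}c_{2k})\big|_{\cc{H}_+}$.

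To conclude, note that the $A_k$ are mutually commuting Hermitian involutions ($A_k^2=\idty$) obeying the single relation $A_1 A_2\cdots A_m=C_{\rm all}\big|_{\cc{H}_+}=\idty$, which for $m\le 3$ leaves at most two independent ones (for $m=3$, $A_3=A_1 A_2$). Diagonalising $A_1$ and $A_2$ simultaneously, $\proj{\psi}\big|_{\cc{H}_+}$ becomes diagonal with entries $\beta_0+\beta_1\epsilon_1+\beta_2\epsilon_2+\beta_3\epsilon_1\epsilon_2$, $\epsilon_i\in\{\pm 1\}$; imposing that this be a rank-one projector (one entry $=1$, the rest $=0$) is an over-determined linear system whose only solutions are $\proj{\psi}\big|_{\cc{H}_+}=\frac{1}{2}(\idty\pm A_1)\cdot\frac{1}{2}(\idty\pm A_2)$. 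Hence $\ket{\psi}$ is a joint eigenstate of $A_1$ and $A_2$, and therefore of $A_3=A_1 A_2$, i.e.\ of every $\ii c_{2k-1}c_{2k}$ --- so $\ket{\psi}$ is a Gaussian state in the standard form~(\ref{sform}); the cases $m=1,2$ follow from the same, shorter, computation. The step I expect to be the crux is the structural reduction in the second paragraph: verifying that for $m\le 3$ (and only then) the restriction to a fixed-parity sector of an arbitrary even operator contains no \emph{independent} quartic or higher term, and that block-diagonality of $M$ is inherited by the reduced expression --- for $m\ge 4$ genuine quartic operators survive on $\cc{H}_\pm$, which is exactly why non-Gaussian pure states exist there. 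As an alternative route for $m\le 3$ one may instead invoke the exceptional isomorphisms $\mathrm{Spin}(2)\cong\mathrm{U}(1)$, $\mathrm{Spin}(4)\cong\mathrm{SU}(2)\times\mathrm{SU}(2)$ and $\mathrm{Spin}(6)\cong\mathrm{SU}(4)$: FLO transformations act on each parity sector through the corresponding half-spinor representation, which for $m\le 3$ realises the full unitary group of that sector and so acts transitively on its rays; since each sector contains a pure Gaussian state, every pure even state is an FLO image of a Gaussian state.
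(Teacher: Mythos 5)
Your argument is correct and is essentially the proof in the paper: after block-diagonalising $M$, the parity constraint $C_{\rm all}\ket{\psi}=\pm\ket{\psi}$ is used to force every surviving term of $\proj{\psi}$ (including, for $m=3$, the quartic terms, which are complementary to quadratic ones) into the commutative algebra generated by the $\ii c_{2k-1}c_{2k}$, and purity then forces $\ket{\psi}$ to be a joint eigenstate of these operators, i.e.\ Gaussian --- the paper closes this last step with the dephasing channel from the proof of Lemma~\ref{lem:M} rather than your explicit diagonalisation in the fixed-parity sector, but the content is the same. Your closing remark via the exceptional isomorphisms (e.g.\ $\mathrm{Spin}(6)\cong \mathrm{SU}(4)$, so FLO acts transitively on rays of each half-spinor sector) is a genuinely different, and also valid, route not taken in the paper.
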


\begin{proof}
The statement is trivial for $m=1$. Consider $m=2$ and observe that when we block-diagonalize its correlation matrix, any even pure state can written as some linear combination of $I$, quadratic terms $\ii c_{2k-1}c_{2k}$ and $C_{\rm all}$. The dephasing procedure in the proof of Lemma \ref{lem:M} leaves any such even density 
matrix invariant -- remember that the operator $C_{\rm all}$ is invariant under FLO transformations. As the output is always convex-Gaussian,  the input state was a pure Gaussian state. The proof for $m=3$ follows the same structure as for $m=2$, but employs the additional observation that $C_{\rm all}\proj{\psi}=\pm\proj{\psi}$ for even pure states. Any pure state density matrix  $\ket{\psi}\bra{\psi}\in \cc C_6$, when written in the basis in which its correlation matrix is block-diagonal, can be expressed as  $\ket{\psi}\bra{\psi}=\alpha I+\beta C_{\rm all}+ \ii\sum_k \gamma_k c_{2k-1} c_{2k} +\sum_{i < j < k < l} \eta_{ijkl} c_i c_j c_k c_l$. The condition that $C_{\rm all} \ket{\psi}\bra{\psi}= \pm \ket{\psi}\bra{\psi}$ implies that the quartic terms only have paired correlators $c_{2k-1} c_{2k}$: this is because $C_{\rm all}$ interchanges the quartic and quadratic terms. Thus $\ket{\psi}\bra{\psi}$ has  quartic terms which only involve commuting operators 
$c_{2k-1}c_{2k}$. As argued in the proof of Lemma~\ref{lem:M},  $\ket{\psi}\bra{\psi}$ is then a convex mixture of pure Gaussian states, and the result follows.
\end{proof}

As for separable states~\cite{Karol98,braunstein+:ball,GB:Ball}, the volume of convex-Gaussian states is non-zero for any $m$: any state $\rho \in \cc C_{2m}$ is convex-Gaussian if it is ``close enough'' to the maximally mixed state $I/2^m$, which is Gaussian. The following Lemma proves this, using arguments similar as those in Ref.~\cite{braunstein+:ball}:
\begin{theorem} For every even state $\rho \in \cc C_{2m}$ there exists a sufficiently small $\epsilon > 0$ such that $\rho_{\epsilon}=\epsilon \rho +(1-\epsilon) I/2^m$ is convex-Gaussian.
\label{lem:ball}
\end{theorem}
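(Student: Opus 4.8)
The plan is to show that the maximally mixed state $I/2^m$ lies in the interior of the set of convex-Gaussian states, by exhibiting an explicit small ball around it (in some norm on the $(2^{2m}-1)$-dimensional affine space of even states) consisting entirely of convex-Gaussian states; the claim for arbitrary $\rho$ then follows since $\rho_\epsilon$ lands inside that ball for $\epsilon$ small enough. A convenient way to produce Gaussian states surrounding $I/2^m$ is to use the pure Gaussian states $\ket{\psi_R}$ obtained by applying an arbitrary FLO transformation $U_R$ (with $R\in SO(2m)$) to a fixed reference pure Gaussian state, say $\sigma_0=\prod_{k=1}^m (I+\ii c_{2k-1}c_{2k})/2^m$. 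Averaging $U_R\sigma_0 U_R^\dagger$ over the Haar measure on $SO(2m)$ gives back $I/2^m$ by symmetry, so $I/2^m$ is a genuine convex combination (integral) of pure Gaussian states; the point is to perturb this average.

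The key steps, in order: First I would fix a basis for the even operators: $I$ together with the normalized products $\ii^k c_{a_1}\cdots c_{a_{2k}}$, and note that an even state $\rho=\epsilon\rho+(1-\epsilon)I/2^m$ is close to $I/2^m$ precisely when all its nontrivial correlators are of size $O(\epsilon)$. Second, I would compute, using Wick's theorem~(\ref{wick}) or a direct Clifford-algebra calculation, the correlators of the rotated reference state $U_R\sigma_0 U_R^\dagger$; its correlation matrix is $R\,(\bigoplus_k \left(\begin{smallmatrix}0 & 1\\ -1 & 0\end{smallmatrix}\right))\,R^T$ and all higher correlators are Pfaffians of submatrices thereof. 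Third — the technical heart — I would show that the linear span of $\{U_R\sigma_0 U_R^\dagger - I/2^m : R\in SO(2m)\}$ is the entire space of traceless even operators. Equivalently, I want to rule out a nonzero traceless even $X$ with $\tr(X\,U_R\sigma_0 U_R^\dagger)=0$ for all $R$; since $\tr(X\,U_R\sigma_0U_R^\dagger)$ is, up to relabeling, a single matrix entry of $X$ in a rotated Majorana basis expanded out via Wick, vanishing for all $R$ forces all the $\alpha$-coefficients of $X$ to vanish by an $SO(2m)$-irreducibility / polynomial-identity argument on the coefficient functions. Fourth, once surjectivity of this linear map is established, a standard compactness argument (the Gaussian states form a compact set whose convex hull contains $I/2^m$ in its relative interior, because we can reach a spanning set of directions and their negatives) yields a radius $\delta>0$ such that every even state within distance $\delta$ of $I/2^m$ is convex-Gaussian. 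Fifth, for given $\rho$ pick $\epsilon$ with $\epsilon\,\|\rho-I/2^m\|<\delta$ and conclude.

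The main obstacle is the third step: proving that the rotated pure Gaussian states span all traceless even directions around $I/2^m$. One has to be careful that the correlators of $U_R\sigma_0U_R^\dagger$ involve only Pfaffians of a rank-constrained ($M^TM=I$) correlation matrix, so a priori the reachable directions could be confined to a proper subvariety; the resolution is that even though each individual Gaussian state is constrained, the \emph{tangent directions} $\frac{d}{dt}U_{R(t)}\sigma_0 U_{R(t)}^\dagger$ at $t=0$, as $R(t)$ ranges over all one-parameter subgroups, generate a much larger space, and combined with the affine (not merely linear) nature of convex hulls this suffices. I would make this rigorous either by the irreducibility argument sketched above or, following Ref.~\cite{braunstein+:ball}, by an explicit estimate: bound the operator norm of each correction term $\ii^k c_{a_1}\cdots c_{a_{2k}}$, write the target perturbation $\epsilon(\rho - I/2^m)$ as an explicit signed combination of finitely many $U_R\sigma_0U_R^\dagger - I/2^m$ with coefficients summing appropriately, and check positivity of the resulting convex weights for $\epsilon$ below an explicit threshold.
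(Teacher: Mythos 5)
Your high-level strategy --- showing that $I/2^m$ is an interior point of the set of convex-Gaussian states within the affine space of even states, so that $\rho_\epsilon$ lands in a ball of convex-Gaussian states for small $\epsilon$ --- is the same as the paper's. But the execution has a genuine gap at its foundation: you work with the single FLO orbit $\{U_R\sigma_0 U_R^\dagger\}$ of one reference pure Gaussian state, and this cannot supply the required directions because the parity operator $C_{\rm all}=\ii^m c_1\cdots c_{2m}$ is invariant under every FLO transformation. Since $\tr(C_{\rm all}\,\sigma_0)=+1$, every state in the orbit satisfies $\tr(C_{\rm all}\,U_R\sigma_0 U_R^\dagger)=+1$, whereas $\tr(C_{\rm all}\,I/2^m)=0$. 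Consequently your opening claim --- that Haar-averaging the orbit over $SO(2m)$ returns $I/2^m$ --- is false: the average is $(I+C_{\rm all})/2^m$, the normalized projector onto the even-parity sector, and $I/2^m$ is not even in the convex hull of a single parity class of pure Gaussian states. For the same reason your step 3 fails: the span of the orbit is confined to the subspace $\{Y:\tr(C_{\rm all}Y)=\tr Y\}$, and there are further obstructions. For instance, for $m=2$ the operator $X=\ii(c_1c_2-c_3c_4)$ is even, traceless, nonzero, and has $\tr(X\,U_R\sigma_0 U_R^\dagger)=0$ for all $R$, because the correlation matrices $RM_0R^T$ in the orbit of the block-diagonal $M_0$ are all self-dual two-forms. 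More generally, Hodge duality locks together the degree-$2k$ and degree-$(2m-2k)$ components of every state in a single orbit, so the irreducibility argument you invoke runs into genuine multiplicities (the $2k$-th and $(2m-2k)$-th exterior powers of $\Rl^{2m}$ are isomorphic as $SO(2m)$-representations) that it does not resolve.

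The fix is to use pure Gaussian states of \emph{both} parities, and this is what the paper does, in a more elementary and fully constructive way: it takes the finite family $\sigma(\vec\lambda,\pi)=2^{-m}\prod_k(I+\ii\lambda_k c_{\pi(2k-1)}c_{\pi(2k)})$ over all sign vectors $\vec\lambda\in\{-1,1\}^m$ (which ranges over both parity sectors) and all pairings $\pi$, shows by an explicit recursion on correlator weight that any even Hermitian operator equals a nonnegative combination of these states minus $c\,I/2^m$ for some $c\geq 0$, and reads off $\epsilon=1/(1+c)$. If you want to keep your representation-theoretic route you must (i) start from two reference states $\sigma_0^{\pm}$ with $C_{\rm all}=\pm 1$ and (ii) actually handle the multiplicities of the exterior-power representations; the paper's explicit decomposition sidesteps both issues.
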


\begin{proof}
Consider the set $\{\sigma(\vec{\lambda},\pi)\}$ of pure Gaussian states, with each $\sigma(\vec{\lambda},\pi) \in \cc C_{2m}$ defined as
\beq
\sigma(\vec{\lambda}, \pi) = \frac{1}{2^m} \prod_{k=1}^m (I +\ii \lambda_k c_{\pi(2k-1)}c_{\pi(2k)}),
\label{eq:defsig}
\eeq
where $\vec{\lambda} \in \{-1,1\}^m$ and $\pi\in S_{2m}^+$, with $S_{2m}^+$ the set of permutations with positive signature (these correspond to  $SO(2m)$ rotations of the correlation matrix, i.e.,  FLO operations). There are $(2m)!/m!$ pure states in this set and we will show that they form an over-complete basis for $\cc C_{2m}$. We prove this by showing that {\em any} Hermitian $\rho$ can be written as
\begin{equation}
\rho=\sum_{(\vec{\lambda},pi)} w_{(\vec{\lambda},\pi)} \sigma(\vec{\lambda},\pi)-c \frac{I}{2^m},
\label{decomp}
\end{equation}
where $w_{(\vec{\lambda},\pi)}\geq 0$ and $c \geq 0$. Furthermore, we can write 
\begin{equation*}
\frac{I}{2^m} = \frac{m!}{(2m)!} \sum_{\vec{\lambda},\pi} \sigma(\vec{\lambda}, \pi).
\end{equation*}
If we find the decomposition in Eq.~(\ref{decomp}) with a certain $c$ (which one could try to minimize in our construction of the decomposition), it then follows immediately that $\frac{1}{1+c}\rho+\frac{c}{1+c} \frac{I}{2^m}$ is convex-Gaussian. Taking $\epsilon=\frac{1}{1+c}$ we obtain our result. One could in principle upper bound $c$ as a function of $m$. Eq.~(\ref{decomp}) can be proved by considering how to express a single correlator $\alpha_{a_1,\ldots, a_{2k}} c_{a_1} c_{a_2} \ldots c_{a_{2k}}$ in terms of Gaussian states and $I$. Let us call the subset $\{a_1, \ldots, a_{2k}\}=S$ and the coefficient $\alpha_{a_1,\ldots,a_{2k}}=\alpha$. Let $C_S=c_{a_1} c_{a_2} \ldots c_{a_{2k}}$. We will use Gaussian mixed states $\xi(\vec{\lambda}, \pi) = \frac{1}{2^m} \prod_{k=1}^m (I +\ii \lambda_k c_{\pi(2k-1)}c_{\pi(2k)}) $ where $\lambda_i \in [-1,1]$; such states can in turn be expressed as convex mixtures of pure states $\sigma(\vec{\lambda},\pi)$ for $\lambda_i \in\{-1,1\}$ as in Eq.~(\ref{eq:defsig}). We have 
\begin{equation}
\ii^k {\rm Tr} \,C_{S} \xi(\vec{\lambda}, \pi)=\Pi_{k:\pi(2k-1), \pi(2k) \in S}\lambda_k.
\label{corrcontrib}
\end{equation}
By choosing $\lambda_k\in \{-1,1,0\}$ we can make sure that (i) $|\alpha| \xi(\vec{\lambda}, \pi)=w_{\vec{\lambda},\pi} \xi(\vec{\lambda},\pi)$ has the correct expectation $\alpha$ for the correlator $C_S$ (note that the sign of $\lambda_k$ can be chosen to fit the sign of $\alpha$) and (ii) $\xi(\vec{\lambda},\pi)$ has {\em no higher-order correlators} (by choosing some $\lambda_k=0$). The state $\xi(\vec{\lambda},\pi)$ will then also have {\em lower-order} non-zero correlators $C_{S'}$ for $S' \subseteq S$, including $S'=\emptyset$ corresponding to $I$. Thus we repeat the procedure with these various lower-order correlators, i.e.~we represent them as a Gaussian mixed state $\xi(\vec{\lambda},\pi)$ plus additional correlators corresponding to subsets of $S'$. Proceeding this way, we remove all correlators except for $I$ and end up expressing $\alpha C_S=\sum_{\vec{\lambda},\pi} w_{(\vec{\lambda},\pi)} \xi(\vec{\lambda},\pi)- \beta I$ for $\beta = 2^{-m}\sum_{\lambda,\pi} w_{\lambda,\pi}$. This procedure is probably far from optimal, i.e. gives rise to a large $c$. One can optimize this by applying the procedure to a density matrix $\rho$ from which we repeatedly take out Gaussian mixed states, i.e. $\rho \rightarrow \rho-w  \xi$ which remove the current highest-order correlator and the Gaussian state $\xi$ is chosen to be the one with minimum (but non-negative) weight ${\rm Tr}\, w \xi$. 
\end{proof}

Now that we have established the essential features of the set of convex-Gaussian states, the natural question that follows is how to determine whether a state has a decomposition in terms of Gaussian pure states. As for separability, there are two sides to this question. One is to find a Gaussian decomposition, a problem which is likely, as in the entanglement case, to be computationally hard in general. The reverse question is to find a criterion which establishes that a state is not convex-Gaussian. Note that the goal is to find a criterion which acts linearly on the pure Gaussian states, so that it can naturally be extended to convex mixtures thereof. The Hermitian operator $\Lambda=\sum_{i=1}^{2m} c_i \otimes c_i \in \cc C_{2m} \otimes \cc C_{2m}$, introduced in Ref.~\cite{bravyi:FLO}, will be useful in this context as it has been shown to lead to a necessary and sufficient criteria for a state to be Gaussian:

\begin{lemma}[Bravyi~\cite{bravyi:FLO}] An even state $\rho\in \cc C_{2m}$ is Gaussian iff $[\Lambda, \rho \otimes \rho]=0$.
\end{lemma}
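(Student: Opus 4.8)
The plan is to establish the equivalence by exploiting the behavior of the operator $\Lambda=\sum_{i=1}^{2m} c_i\otimes c_i$ under FLO transformations and its action on pure Gaussian states. First I would record the covariance property: if $U$ is a FLO unitary inducing $R\in SO(2m)$ via $Uc_iU^\dagger=\sum_j R_{ij}c_j$, then $(U\otimes U)\Lambda(U\otimes U)^\dagger = \sum_{i,j,k}R_{ij}R_{ik}\, c_j\otimes c_k = \sum_j c_j\otimes c_j=\Lambda$, using $R^TR=I$. Hence the condition $[\Lambda,\rho\otimes\rho]=0$ is invariant under applying a common FLO transformation to $\rho$, so without loss of generality we may work with $\rho$ in the standard form with its correlation matrix block-diagonalized.

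The ``only if'' direction: take a Gaussian state in standard form $\rho=\frac{1}{2^m}\prod_{k=1}^m(I+\ii\lambda_k c_{2k-1}c_{2k})$. I would verify directly that $c_i\otimes c_i$ commutes with $\rho\otimes\rho$ summed over $i$. The slick way is to note that $\Lambda$ has a natural interpretation as a ``swap-like'' operator generating a rotation that mixes the two copies, and more concretely one checks the identity $[c_i\otimes c_i + c_j\otimes c_j,\ (\ii c_ic_j)\otimes I + I\otimes(\ii c_ic_j)]=0$ by a short Clifford-algebra computation, while $c_i\otimes c_i$ for $i\notin\{2k-1,2k\}$ commutes with $(\ii c_{2k-1}c_{2k})\otimes I$ trivially; assembling these over the block structure gives $[\Lambda,\rho\otimes\rho]=0$. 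For the ``if'' direction, suppose $[\Lambda,\rho\otimes\rho]=0$; after a FLO transformation assume $M(\rho)$ is block-diagonal with entries $\pm\ii\lambda_k$. The strategy is to show this forces $\lambda_k\in\{-1,1\}$, i.e. $M^TM=I$, which by Lemma~\ref{lem:M} means $\rho$ is pure Gaussian. One extracts from the commutator vanishing a set of constraints on the coefficients $\alpha_{a_1,\dots,a_{2k}}$ in the expansion~(\ref{anyX}) of $\rho$: pairing the identity $\Lambda\,\rho\otimes\rho = \rho\otimes\rho\,\Lambda$ against suitable test operators $c_a\otimes c_b$ or using the trace $\tr\big((c_a\otimes I)[\Lambda,\rho\otimes\rho](c_b\otimes I)\big)$ yields relations between correlators of consecutive order. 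The cleanest route is probably to observe that $\Lambda$ acts as a derivation-like operator that ``transfers'' one Majorana operator between the two tensor factors, so $[\Lambda,\rho\otimes\rho]=0$ says $\rho$ is an eigenvector of the map $X\mapsto\sum_i c_i X c_i$ (up to sign, restricted appropriately), and the eigenvalue analysis of $X\mapsto\sum_i c_i X c_i$ on homogeneous-degree-$2k$ elements of $\cc C_{2m}$ (eigenvalue $2m-4k$ on degree-$2k$ monomials) pins down which mixtures of degrees can occur.

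The main obstacle I anticipate is the ``if'' direction: turning the single operator equation $[\Lambda,\rho\otimes\rho]=0$ into the statement $M^TM=I$ requires handling all correlators of $\rho$ simultaneously, not just the quadratic ones, since a priori $\rho$ has nonzero $\alpha_{a_1,\dots,a_{2k}}$ for all $k$. I would organize this by decomposing $\rho=\sum_{k=0}^m \rho^{(2k)}$ into homogeneous degree components, writing out $[\Lambda,\rho\otimes\rho]=0$ in terms of the bilinear ``transfer'' products $\rho^{(2k)}\diamond\rho^{(2l)}$ where $\diamond$ denotes the operation induced by $\Lambda$, and showing degree-by-degree (starting from the top degree $2m$ and inducting downward, or comparing the $c_a\otimes c_b$ components) that either the higher correlators vanish or they are forced to be exactly the Wick-contracted products of the quadratic data with $\lambda_k=\pm1$. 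Since this is Bravyi's lemma quoted from~\cite{bravyi:FLO}, I would in practice cite that reference for the detailed computation, but the self-contained argument above via the spectral decomposition of $X\mapsto\sum_i c_ic_i\otimes$-conjugation, combined with the FLO-covariance reduction to standard form and Lemma~\ref{lem:M}, is the skeleton I would present.
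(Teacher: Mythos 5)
The paper does not prove this lemma itself; it is quoted from Bravyi's work, so I can only assess your argument on its own terms. Your ``only if'' direction is essentially sound: the identity $[c_i\otimes c_i+c_j\otimes c_j,\ (\ii c_ic_j)\otimes I+I\otimes(\ii c_ic_j)]=0$ checks out, and since the remaining $c_k\otimes c_k$ commute with $(\ii c_ic_j)\otimes I$ trivially, $\Lambda$ commutes with $H\otimes I+I\otimes H$ for every quadratic $H$ and hence with $e^{-H}\otimes e^{-H}\propto\rho\otimes\rho$ (pure Gaussian states being handled as limits). If you instead assemble block-by-block from the standard form, note that you also need $[\Lambda,\,c_{2k-1}c_{2k}\otimes c_{2k-1}c_{2k}]=0$ for the $\lambda_k^2$ cross term of $\chi_k\otimes\chi_k$; this holds, but it is not covered by the two observations you list.

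The ``if'' direction, however, contains a genuine error: you are proving the wrong statement. The lemma characterizes \emph{all} Gaussian states, mixed ones included (the paper's Eq.~(\ref{sform}) with arbitrary $\lambda_k\in[-1,1]$, e.g.\ $I/2^m$), whereas your strategy aims to force $\lambda_k\in\{-1,1\}$, i.e.\ $M^TM=I$ and purity. That conclusion is simply false under the hypothesis $[\Lambda,\rho\otimes\rho]=0$: the maximally mixed state satisfies it with $M=0$. You have conflated the lemma with Corollary~\ref{lem:iffgaus}, which is the distinct statement that $\Lambda(\rho\otimes\rho)=0$ (annihilation, not commutation) holds iff $\rho$ is a \emph{pure} Gaussian state. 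The correct target for the ``if'' direction is to show that the commutator condition forces every higher-order correlator of $\rho$ to equal the Pfaffian (Wick) contraction of its correlation matrix $M$, with no constraint that $M^TM=I$; the degree-by-degree recursion you sketch is the right kind of machinery, but it must be steered toward the Wick identities rather than toward $\lambda_k=\pm1$. Also, your claim that $[\Lambda,\rho\otimes\rho]=0$ makes $\rho$ an eigenvector of $X\mapsto\sum_i c_iXc_i$ does not follow: the commutator expands as $\sum_i(c_i\rho\otimes c_i\rho-\rho c_i\otimes\rho c_i)$, which is a bilinear (two-copy) constraint, not a linear eigenvalue equation on $\rho$.
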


\noindent The operator $\Lambda$ has the important property of being invariant under $U \otimes U$ where $U$ is any FLO transformation. The action of $\Lambda$ on $\rho \otimes \rho$ can  be appreciated by considering the trace norm $||.||_1$ of the positive operator $\Lambda (\rho\otimes\rho) \Lambda$:
 \begin{eqnarray}
|| \Lambda (\rho \otimes \rho) \Lambda||_1&=&\sum_{a,b=1}^{2m} \tr[(c_a\otimes c_a)(\rho\otimes \rho)(c_b\otimes c_b)]\nonumber\\
&=& 2m-\sum_{a \neq b} \left( {\rm Tr} \, \ii c_a c_b \rho\right)^2\nonumber\\
&=&2m- \sum_{a, b} (M_{ab})^2=2m-{\rm Tr} \,M^T M,
\label{eq:LrrL}
 \end{eqnarray}
where $M$ is the correlation matrix of the state $\rho$. For a pure Gaussian state $\rho=\sigma$, we have $M^T M=I$ implying that $\Lambda(\sigma \otimes \sigma)=0$. For a mixed Gaussian state or non-Gaussian state $M^T M < I$, see Lemma \ref{lem:M}, and hence $||\Lambda (\rho \otimes \rho) \Lambda ||_1 > 0$. These observations are collected in the following corollary.

\begin{corollary}
For an even state $\rho\in \cc C_{2m}$,  $\Lambda (\rho \otimes \rho)=0$ iff $\rho$ is a pure Gaussian state. 
\label{lem:iffgaus}
\end{corollary}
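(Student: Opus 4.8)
The plan is to read off the result from the trace-norm identity~(\ref{eq:LrrL}) and Lemma~\ref{lem:M}, after one elementary observation that upgrades the vanishing of the positive operator $\Lambda(\rho\otimes\rho)\Lambda$ to the vanishing of $\Lambda(\rho\otimes\rho)$ itself.

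First I would record the following: for any even state $\rho$, $\Lambda(\rho\otimes\rho)=0$ if and only if $\Lambda(\rho\otimes\rho)\Lambda=0$. The forward implication is immediate. For the converse, write $\rho\otimes\rho=X X^{\dagger}$ with $X=(\rho\otimes\rho)^{1/2}$ positive (hence Hermitian), and recall that $\Lambda$ is Hermitian. Then $\Lambda(\rho\otimes\rho)\Lambda=(\Lambda X)(\Lambda X)^{\dagger}=0$ forces $\Lambda X=0$, whence $\Lambda(\rho\otimes\rho)=(\Lambda X)X^{\dagger}=0$. (This same square-root argument is what is tacitly used in the text when asserting that $M^{T}M=I$ implies $\Lambda(\sigma\otimes\sigma)=0$ for a pure Gaussian $\sigma$.)

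Next, since $\Lambda(\rho\otimes\rho)\Lambda$ is positive semidefinite, it vanishes precisely when its trace norm vanishes, and by~(\ref{eq:LrrL}) this happens iff $2m-\tr(M^{T}M)=0$, i.e. iff $\tr(M^{T}M)=2m$. Now $M^{T}M$ is a $2m\times 2m$ positive semidefinite matrix with $M^{T}M\le I$ by Lemma~\ref{lem:M}, so its $2m$ eigenvalues lie in $[0,1]$ and sum to $2m$ only when every one of them equals $1$; that is, $\tr(M^{T}M)=2m$ iff $M^{T}M=I$. Finally, Lemma~\ref{lem:M} tells us that $M^{T}M=I$ holds iff $\rho$ is a pure Gaussian state. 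Chaining these equivalences with the first step yields $\Lambda(\rho\otimes\rho)=0$ iff $\rho$ is a pure Gaussian state.

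There is really no hard step here: every ingredient except the short operator-algebra remark in the first paragraph has already been established (Lemma~\ref{lem:M}, the identity~(\ref{eq:LrrL})), and that remark is a one-line consequence of positivity of $\rho\otimes\rho$. If one wanted to be even more economical, one could present only the two directions explicitly: for pure Gaussian $\rho$, $M^{T}M=I$ gives $||\Lambda(\rho\otimes\rho)\Lambda||_1=0$ and hence $\Lambda(\rho\otimes\rho)=0$; conversely $\Lambda(\rho\otimes\rho)=0$ forces $\tr(M^{T}M)=2m$, hence $M^{T}M=I$, hence $\rho$ pure Gaussian.
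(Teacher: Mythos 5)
Your proof is correct and follows essentially the same route as the paper, which derives the corollary directly from the trace-norm identity~(\ref{eq:LrrL}) together with Lemma~\ref{lem:M}. The only addition is that you make explicit the (correct, and indeed tacitly used) square-root argument upgrading $\Lambda(\rho\otimes\rho)\Lambda=0$ to $\Lambda(\rho\otimes\rho)=0$.
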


Corollary \ref{lem:iffgaus} says that the state of two copies of pure Gaussian states is contained in the null-space of the operator $\Lambda$. In Lemma~\ref{lem:gauss_sym} in the Appendix A we will prove that the null-space of the operator $\Lambda$ is contained in the symmetric subspace ${\bf Sym}^2(\Cx^{2^m})$ spanned by vectors $\ket{\psi}$ which are invariant under the SWAP operator $P$, i.e. $\ket{\psi}=P \ket{\psi}$. In particular, the null-space of $\Lambda$ is spanned by states $\ket{\psi,\psi}$ where $\psi$ is Gaussian, while the symmetric subspace is spanned by $\ket{\phi,\phi}$ for any $\phi$. We will call the null-space of $\Lambda$ the Gaussian-symmetric subspace. As $\Lambda$ is a sum of mutually commuting Hermitian operators $c_i \otimes c_i$ with $\mu_i=\pm 1$ eigenvalues, the projectors on the eigenstates are $P_{\vec{\mu}}=\frac{1}{2^{2m}}\Pi_{i=1}^{2m} ( I + \mu_i c_i \otimes c_i)$ with eigenvalue $\sum_{i=1}^{2m} \mu_i$ of $\Lambda$. Hence the Gaussian-symmetric subspace is spanned by $2m \choose m$ orthogonal eigenvectors $P_{\vec{\mu}}$ with the property that $\sum_i \mu_i=0$. Clearly, the dimension of the Gaussian-symmetric subspace ${2m \choose m}$ is smaller than the dimension of the symmetric subspace ${\bf Sym}^2(\Cx^{2^m})$ which is ${2^m+1 \choose 2}$ for any $m \geq 1$.

In \cite{DPS:prl, DPS:long} the authors established a series of tests for separability based on the existence of a symmetric extension of any separable state. The usefulness of these tests relies on the fact that they correspond to semi-definite programs which for small numbers of extensions can be implemented numerically. Furthermore, these tests are complete, in the sense that an entangled state does not have a symmetric extension to an arbitrary number of parties in the extension. Here we  formulate a similar series of extension tests which are all passed by convex-Gaussian states, but  non-convex-Gaussian states fail in some of them. Our criterion is based on the following simple observation:  Let a density matrix $\rho\in \cc C_{2m}$ be convex-Gaussian, i.e. we can write $\rho=\sum_i p_i \sigma_i$ with $\sigma_i$  pure Gaussian states, then there exists a symmetric extension $\rho_{ext} \in \cc C_{2m}^{\otimes n}$, namely $\rho_{\rm ext}= \sum_i p_i \sigma_i^{\otimes n}$, which is annihilated by $\Lambda$ acting on any pair of tensor-factors, and $\tr_{2,\ldots, n-1}\rho_{ext}=\rho$.
This immediately leads to the following feasibility semi-definite program:
\begin{program}
\label{prog:SExtSDP}
\begin{tabular}[t]{ll}
Input:&$\rho \in \cc C_{2m}$ and an integer $n\ge 2$\\
Body:&Is there a $\rho_{\rm ext} \in \cc C_{2m}^{\otimes n}$\\
&s.t. \begin{tabular}[t]{l}
$\tr_{2,\ldots, n}\rho_{ext}=\rho$\\
$\tr \,\rho_{\rm ext}=1$\\
$\Lambda^{k,l} \rho_{ext}=0, \forall k \neq l$\\
$\rho_{\rm ext}\geq 0$
\end{tabular}\\
Output:& \verb+yes+ (provide $\rho_{\rm ext}$) or \verb+no+
\end{tabular}
\end{program}
\noindent As there exists an isomorphism between $\mathcal{C}_{2m}^{\otimes n}$ and $\mathcal{C}_{2m n}$ (see the explicit map in~\cite{bravyi:FLO} for $\mathcal{C}_{2m} \otimes \mathcal{C}_{2m}$ and $\mathcal{C}_{4m}$), one could alternatively express program 1 as an extension of $\rho$ to a physical system with $2m n$ Majorana fermions (see~\cite{anna13}).

Note that in program~\ref{prog:SExtSDP} we do not need to enforce any permutation or Bose-symmetry on the extension \footnote{Adding the permutation-symmetry would of course leave the program in SDP form.} because of the following. By Lemma \ref{lem:gauss_sym} in Appendix A, the null-space of $\Lambda_{k,l}$ is spanned by vectors $\ket{\psi,\psi}_{k,l}$ where 
$\ket{\psi}$ is any pure Gaussian state. Thus the intersection of null-spaces of all $\Lambda_{k,l}$ is spanned by 
vectors $\ket{\psi^{\otimes n}}$ where $\psi$ is a pure Gaussian state, which are vectors in the symmetric subspace of $n$ parties. As $\rho_{ext}$ lies in this null-space, it will thus be Bose-symmetric \cite{CKMR:definetti}, i.e. $\pi \rho_{ext}=\rho_{ext}$ where $\pi$ is any permutation in $S_n$.

It is easy to see that any state $\rho$ has a Bose-symmetric extension, hence the stronger constraint imposed by $\Lambda$ is crucial. We say that a even state $\rho$ has a $n$-Gaussian-symmetric extension if Program~\ref{prog:SExtSDP} returns \verb+yes+ for inputs $\rho$ and $n$.  An explicit construction of the SDP for the case $n=2$ is given in Appendix B, and  some numerical results are discussed in the following section. Clearly all convex-Gaussian states have $n$-Gaussian-symmetric extensions for all $n$. Here we will prove that if a state has a $n$-Gaussian-symmetric extension for all $n$, then its distance to the set of convex-Gaussian states converges to zero.

\begin{theorem}
If an even state $\rho\in \cc C_{2m}$ has a $n$-Gaussian-symmetric extension for all $n$, then there exists a sequence of convex-Gaussian states $ \omega_1, \omega_2,\ldots \in  \mathcal{C}_{2m}$ such that $\lim_{n\rightarrow \infty} ||\rho - \omega_n||_1=0$.
\end{theorem}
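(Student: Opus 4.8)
The plan is to feed the family of Gaussian-symmetric extensions into a quantum de Finetti theorem, obtaining for each $n$ an approximation of $\rho$ by a mixture of i.i.d.\ states, and then to use the \emph{exact} constraint $\Lambda^{k,l}\rho_{\rm ext}=0$ to force the corresponding de Finetti measures to concentrate on pure Gaussian states. For each $n\ge 2$ fix an $n$-Gaussian-symmetric extension $\rho_{{\rm ext},n}\in\cc C_{2m}^{\otimes n}$; by the remarks preceding the theorem it is permutation-symmetric, satisfies $\tr_{2,\dots,n}\rho_{{\rm ext},n}=\rho$, and $\Lambda^{k,l}\rho_{{\rm ext},n}=0$ for all $k\neq l$. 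Viewing each tensor factor $\cc C_{2m}$ as acting on a Hilbert space of dimension $d=2^m$, the quantum de Finetti theorem \cite{CKMR:definetti} supplies a probability measure $\mu_n$ on density matrices of $\cc C_{2m}$ such that
\beq
\left\| \rho^{(2)}_{{\rm ext},n}-\int\sigma^{\otimes 2}\,d\mu_n(\sigma)\right\|_1\le\varepsilon_n ,
\eeq
where $\rho^{(2)}_{{\rm ext},n}:=\tr_{3,\dots,n}\rho_{{\rm ext},n}$ and $\varepsilon_n\to 0$ as $n\to\infty$ (of order $1/n$ at fixed $d$); tracing out one more factor gives $\|\rho-\int\sigma\,d\mu_n(\sigma)\|_1\le\varepsilon_n$ as well. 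Moreover, since the support of $\rho_{{\rm ext},n}$ lies in the span of the vectors $\ket{\psi}^{\otimes n}$ with $\psi$ a pure Gaussian state, $\rho_{{\rm ext},n}$ is invariant under $C_{\rm all}$ acting on each factor; as $C_{\rm all}$ is itself a FLO transformation, replacing every $\sigma$ by $\frac{1}{2}(\sigma+C_{\rm all}\sigma C_{\rm all})$ leaves both bounds intact, so we may assume $\mu_n$ is supported on even states.

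Next I would convert the $\Lambda$-constraint into a concentration statement for $\mu_n$. Tracing $\Lambda^{1,2}\rho_{{\rm ext},n}=0$ over the last $n-2$ factors gives $\Lambda\,\rho^{(2)}_{{\rm ext},n}=0$ exactly, hence $\tr[\rho^{(2)}_{{\rm ext},n}\Lambda^2]=0$. Setting $g(\sigma):=\tr[(\sigma\otimes\sigma)\Lambda^2]=\|\Lambda(\sigma\otimes\sigma)\Lambda\|_1\ge 0$, where the second equality is Eq.~(\ref{eq:LrrL}), and using $\|\Lambda^2\|_\infty=4m^2$, we obtain
\beq
\int g(\sigma)\,d\mu_n(\sigma)=\tr\left[\left(\int\sigma^{\otimes2}d\mu_n\right)\Lambda^2\right]\le 4m^2\,\varepsilon_n=:\eta_n\to 0 .
\eeq
The function $g$ is a polynomial in the matrix entries of $\sigma$, hence continuous on the compact set of even states, and $g(\sigma)=0$ iff $\Lambda(\sigma\otimes\sigma)=0$ (since $\sigma\otimes\sigma\ge 0$) iff $\sigma$ is a pure Gaussian state, by Corollary~\ref{lem:iffgaus}. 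Writing $\cc G$ for the compact set of pure Gaussian states in $\cc C_{2m}$ and $d_1(\sigma,\cc G):=\min_{\tau\in\cc G}\|\sigma-\tau\|_1$, a standard compactness argument --- if some sequence had $g(\sigma_j)\to 0$ but $d_1(\sigma_j,\cc G)\ge\epsilon$, any limit point would lie in $g^{-1}(0)\setminus\cc G=\emptyset$ --- produces a function $h$ with $h(t)\to 0$ as $t\to 0^+$ and $d_1(\sigma,\cc G)\le h(g(\sigma))$ for every even $\sigma$.

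It remains to assemble the approximating sequence. By Markov's inequality $\mu_n(\{g>t\})\le\eta_n/t$, while on $\{g\le t\}$ every $\sigma$ lies within trace distance $h(t)$ of $\cc G$. Covering $\cc G$ by finitely many trace-norm balls of radius $h(t)$ centred at pure Gaussian states, partitioning $\{g\le t\}$ according to the nearest such centre, and replacing each $\sigma$ in a cell by that cell's centre costs at most $2h(t)$ in trace norm on the $\{g\le t\}$ part. Discarding the $\{g>t\}$ part (of trace mass $\le\eta_n/t$) and renormalizing yields a convex-Gaussian state $\omega_n\in\conv(\cc G)$ with $\|\int\sigma\,d\mu_n-\omega_n\|_1\le 2\eta_n/t+2h(t)$, so that
\beq
\|\rho-\omega_n\|_1\le\varepsilon_n+\frac{2\eta_n}{t}+2h(t) .
\eeq
Choosing $t=t_n\to 0$ slowly enough that $\eta_n/t_n\to 0$ (for instance $t_n=\sqrt{\eta_n}$) sends the right-hand side to zero, which proves the theorem.

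I expect the main obstacle to be the first step: selecting and invoking a quantum de Finetti theorem with a trace-norm error that decays in $n$ at fixed Hilbert-space dimension, and checking that parity superselection causes no difficulty --- which, as noted, it does not once $\mu_n$ is symmetrized over $C_{\rm all}$. The conversion of $\Lambda\,\rho^{(2)}_{{\rm ext},n}=0$ into the average bound on $g$, and the Markov/compactness packaging of the last step, are routine.
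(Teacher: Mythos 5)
Your proposal is correct and follows essentially the same route as the paper: apply the quantum de Finetti theorem of~\cite{CKMR:definetti} to the two-party marginal of the Bose-symmetric extension, use the exact constraint $\Lambda\rho^{(2)}_{\rm ext}=0$ together with Corollary~\ref{lem:iffgaus} to force the de Finetti mixture to concentrate on pure Gaussian states, and finish with the triangle inequality. Your Markov-inequality/compactness packaging just makes explicit the step the paper states informally ("for each $a$ either $\gamma_a(n)\to 0$ or $\tau_a(n)$ converges to a pure Gaussian state"), and the parity symmetrization you worry about is unnecessary, since $g(\sigma)=2m-\tr M^TM$ depends only on the even part of $\sigma$ and $g^{-1}(0)$ consists of pure Gaussian states even without assuming $\sigma$ even.
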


\begin{proof}

To prove that we will invoke the quantum de Finetti theorem~\cite{CKMR:definetti}.
Let $\rho_{\rm ext}^n \geq 0$ be the extension of $\rho$ to $\cc C_{2m}^{\otimes n}$ such that $\Lambda^{k,l}\rho_{\rm ext}^n=0$ and ${\rm Tr}_{2,\ldots, n} \rho_{\rm ext}^n=\rho$. Using Lemma 4 in the appendix this shows that $\rho_{ext}^n$ is Bose-symmetric. Let $\rho_{1,2}^n$ be the reduced density matrix for the first two tensor factors, i.e. $\rho_{1,2}^n={\rm Tr}_{3,\ldots,n} \rho_{ext}^n$. Then Theorem II.8 in \cite{CKMR:definetti} tells us that there exists a probability distribution $\{\gamma_a(n)\}$ and states $\{\tau_a(n)\}\subset \mathcal{C}_{2m}$ such that
\begin{equation}
||\rho_{1,2}^n-\sum_a \gamma_a(n) \tau_a(n) \otimes \tau_a(n)||_1  \leq \epsilon_m(n),
\end{equation}
with $\epsilon_m(n)=\frac{4 \cdot 2^m}{n}$. Since the trace-norm can not increase by taking the partial trace, then 
\begin{equation}
\label{eq:GoToTau}
||\rho-\sum_a \gamma_a(n) \tau_a(n)||_1  \leq \epsilon_m(n).
\end{equation}
Therefore, in the limit of $n\rightarrow \infty$ we have that   $\sum_a \gamma_a(n) \tau_a(n)$  converges  to $\rho$.

On the other hand, we  have
\beqa
||\, \Lambda \sum_a \gamma_a(n) \tau_a(n) \otimes \tau_a(n) \Lambda\,||_1 & = & ||\,\Lambda (\rho_{1,2}^n-\sum_a \gamma_a(n) \tau_a(n) \otimes \tau_a(n)) \Lambda\,||_1 \nonumber\\
& \leq & ||\Lambda||_1^2 \epsilon_m(n) \equiv \tilde{\epsilon}_m(n) ,
\label{eq:GoToGauss}
\eeqa
with $||\Lambda||_1^2=4(m+1)^2 {2m \choose m+1}^2$. For fixed $m$, in the limit $n \rightarrow \infty$ the upper-bound $\tilde{\epsilon}_m(n) \rightarrow 0$. Corollary 1 then implies  that for each $a$ either $\gamma_a(n)\rightarrow 0$ or $\tau_a(n)$ converges to a pure Gaussian state.  That means that there exists a series of pure Gaussian states $\upsilon_a(1), \upsilon_a(2),\ldots \in \mathcal{C}_{2m}$ such that  $\lim_{n\rightarrow \infty} ||\tau_a(n) -\upsilon_a(n)||_1=0$.

Defining the convex-Gaussian states $\omega_n=\sum_a \gamma_a(n) \upsilon_a(n)$, from Eq.(\ref{eq:GoToTau}) we have:
\beqa
\epsilon_m(n) & \geq & ||\rho-\omega_n +\omega_n-\sum_a \gamma_a(n) \tau_a(n)||_1\\
			 & \geq & \Big|\; ||\rho-\omega_n||_1 - ||\omega_n-\sum_a \gamma_a(n) \tau_a(n)||_1\; \Big|.
\eeqa
Where we used the reversed triangle inequality. Therefore:
\beqa
||\rho-\omega_n||_1 & \leq & \epsilon_m(n) +||\omega_n-\sum_a \gamma_a(n) \tau_a(n)||_1\\
			        & \leq & \epsilon_m(n) +\sum_a \gamma_a(n) ||\upsilon_a(n)- \tau_a(n)||_1.
\eeqa
The claimed result then follows by noting that both $\epsilon_m(n)$ and $||\upsilon_a(n)- \tau_a(n)||_1$ go to zero for $n\rightarrow \infty$.
\end{proof}

\begin{remark} A quantitative version of this result will be presented somewhere else~\cite{anna13}.
\end{remark}

Note that convex-Gaussian states always have a symmetric extension which is separable in relation to all tensor-factors. This property can be used to provide an alternative formulation of the necessary and sufficient criterion for a state to be convex-Gaussian:

\begin{proposition} \label{theo:GaussEnt} A state $\rho \in \cc C_{2m}$ is convex-Gaussian iff there exists an extension $\rho_{\rm ext}\in \cc C_{2m}\otimes \cc C_{2m}$ that is a feasible solution to Program~\ref{prog:SExtSDP} ---input $\rho$ and $n=2$--- with the additional constraint that $\rho_{\rm ext}$ is separable between the tensor-factors.
\end{proposition}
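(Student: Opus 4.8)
The plan is to establish the two implications separately; the ``only if'' direction follows immediately from Corollary~\ref{lem:iffgaus}, while the ``if'' direction carries the content. For the ``only if'' direction, assume $\rho=\sum_i p_i\sigma_i$ with the $\sigma_i$ pure Gaussian, and take $\rho_{\rm ext}=\sum_i p_i\,\sigma_i\otimes\sigma_i$ as the candidate feasible point of Program~\ref{prog:SExtSDP} with $n=2$. It is manifestly separable, positive semidefinite and unit-trace; its first marginal is $\tr_2\rho_{\rm ext}=\sum_i p_i\sigma_i=\rho$ since $\tr\sigma_i=1$; and $\Lambda\,\rho_{\rm ext}=\sum_i p_i\,\Lambda(\sigma_i\otimes\sigma_i)=0$ by Corollary~\ref{lem:iffgaus}. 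Since for $n=2$ the only constraint of this type is $\Lambda^{1,2}\rho_{\rm ext}=0$, nothing further is needed.

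For the ``if'' direction, assume $\rho_{\rm ext}\in\cc C_{2m}\otimes\cc C_{2m}$ is a separable, positive semidefinite, unit-trace extension with $\Lambda\,\rho_{\rm ext}=0$ and $\tr_2\rho_{\rm ext}=\rho$, and fix a separable decomposition $\rho_{\rm ext}=\sum_i q_i\,\mu_i\otimes\nu_i$ with $q_i>0$ and $\mu_i,\nu_i$ density matrices on $\Cx^{2^m}$. First I would observe that $\Lambda\,\rho_{\rm ext}=0$ together with $\rho_{\rm ext}\ge 0$ forces the range of $\rho_{\rm ext}$ into $\ker\Lambda$, the Gaussian-symmetric subspace, and that for a sum of positive operators the range of each summand lies inside the range of the sum; hence $\mathrm{range}(\mu_i)\otimes\mathrm{range}(\nu_i)=\mathrm{range}(\mu_i\otimes\nu_i)\subseteq\ker\Lambda$ for every $i$. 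Then I would invoke Lemma~\ref{lem:gauss_sym}: since $\ker\Lambda\subseteq{\bf Sym}^2(\Cx^{2^m})$, every product vector $|a\rangle\otimes|b\rangle$ with $|a\rangle\in\mathrm{range}(\mu_i)$ and $|b\rangle\in\mathrm{range}(\nu_i)$ is SWAP-invariant, which forces $|a\rangle\propto|b\rangle$; running this over all such $|a\rangle,|b\rangle$ shows $\mathrm{range}(\mu_i)=\mathrm{range}(\nu_i)$ is a single line $\mathrm{span}\{\ket{v_i}\}$, so $\mu_i=\nu_i=\proj{v_i}$ and $\ket{v_i,v_i}\in\ker\Lambda$.

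To finish, I would show $\proj{v_i}$ is pure Gaussian. By Lemma~\ref{lem:gauss_sym}, $\ker\Lambda$ is spanned by the vectors $\ket{\psi,\psi}$ with $\psi$ pure Gaussian, and every pure Gaussian state has definite fermion parity, so each vector of $\ker\Lambda$ is a $+1$-eigenvector of $C_{\rm all}\otimes C_{\rm all}$; applied to $\ket{v_i,v_i}$, and again using that equality of two product vectors forces proportionality of their factors, this gives $C_{\rm all}\ket{v_i}=\pm\ket{v_i}$, i.e.\ $\proj{v_i}$ is an even state. Since $\Lambda\ket{v_i,v_i}=0$ we get $\Lambda(\proj{v_i}\otimes\proj{v_i})=0$, and Corollary~\ref{lem:iffgaus} then yields that $\proj{v_i}$ is pure Gaussian. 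Consequently $\rho=\tr_2\rho_{\rm ext}=\sum_i q_i\proj{v_i}$ with $\sum_i q_i=\tr\rho_{\rm ext}=1$, a Gaussian decomposition of $\rho$.

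The step I expect to require the most care is the ``if'' direction's passage from $\mathrm{range}(\mu_i)\otimes\mathrm{range}(\nu_i)\subseteq\ker\Lambda$ to $\mu_i=\nu_i=\proj{v_i}$ with $\proj{v_i}$ pure Gaussian: it uses both halves of Lemma~\ref{lem:gauss_sym} (containment in ${\bf Sym}^2$ and the explicit Gaussian spanning set) together with Corollary~\ref{lem:iffgaus}, and in particular the parity bookkeeping needed to legitimately apply Corollary~\ref{lem:iffgaus} --- which is phrased for even states --- to the individual pure state $\proj{v_i}$. The forward construction and the trace identities are routine.
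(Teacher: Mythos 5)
Your proof is correct, but the ``if'' direction takes a genuinely different route from the paper's. The paper never touches the range of $\rho_{\rm ext}$: starting from a separable decomposition $\rho_{\rm ext}=\sum_i p_i\,\tau_i\otimes\varsigma_i$, it computes $0=\tr\,\Lambda\rho_{\rm ext}\Lambda=2m-\sum_i p_i\,\tr M^T(\tau_i)M(\varsigma_i)$ (the two-state version of Eq.~(\ref{eq:LrrL})), applies Cauchy--Schwarz to the Hilbert--Schmidt pairing of the correlation matrices, and invokes the equality case of Lemma~\ref{lem:M} ($\tr M^TM\le 2m$ with equality iff pure Gaussian) to conclude that every $\tau_i$ and $\varsigma_i$ is pure Gaussian. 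That argument is quantitative and self-contained given Lemma~\ref{lem:M}; it needs nothing about the structure of $\ker\Lambda$. Your argument is instead structural: you push the ranges of the separable components into $\ker\Lambda$ (the fact that summands of a positive decomposition have ranges inside the range of the sum is correct and standard), use the containment $\ker\Lambda\subseteq{\bf Sym}^2(\Cx^{2^m})$ to force each $\mu_i\otimes\nu_i$ onto a single product line $\ket{v_i,v_i}$, and then apply Corollary~\ref{lem:iffgaus}. The parity bookkeeping you flag is handled correctly ($C_{\rm all}\otimes C_{\rm all}$ acts as the identity on $\ker\Lambda$, so $\ket{v_i}$ has definite parity and Corollary~\ref{lem:iffgaus} applies). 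What each buys: your route gives slightly more, since it shows that \emph{any} separable feasible point is automatically of the symmetric product form $\sum_i q_i\proj{v_i}\otimes\proj{v_i}$ with Gaussian $\ket{v_i}$, not merely that the factors are Gaussian; the cost is that it leans on Lemma~\ref{lem:gauss_sym} from Appendix~A (whose proof in turn cites an unpublished classification of the $U\otimes U$ invariants), whereas the paper's Cauchy--Schwarz argument needs only the elementary Lemma~\ref{lem:M} and the trace computation already done in the main text.
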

\begin{proof}
As before one direction is immediate. For the other direction, assume that there exists a state $\rho_{\rm ext}\in \cc C_{2m}\otimes \cc C_{2m}$ which is separable, i.e. it is of the form $\rho_{\rm ext}= \sum_i p_i \tau_i \otimes \varsigma_i$, with $\{\tau_i\}$ and $\{\varsigma_i\}$  sets of density matrices in $\cc C_{2m}$ and $\{p_i\}$ a probability distribution. From the constraint $\Lambda \rho_{\rm ext}=0$, it follows that 
\begin{eqnarray}
0&=&{\rm Tr} \Lambda \left( \sum_i p_i \tau_i \otimes \varsigma_i\right) \Lambda \nonumber \\
&=&2m-\sum_i p_i \sum_{a \neq b}({\rm Tr}\; \ii c_a c_b \tau_i)({\rm Tr} \; \ii c_a c_b \varsigma_i)\nonumber\\
&=&2m-\sum_i p_i {\rm Tr} M^T(\tau_i)  M(\varsigma_i).
\end{eqnarray}
Where the notation $M(\rho)$ represents the correlation matrix of the state $\rho$.

Using the triangle and Cauchy-Schwartz inequalities, we have 
\begin{eqnarray}
2m &=& |\sum_i p_i {\rm Tr} M^T(\tau_i) M(\varsigma_i)|\nonumber\\
&\leq& \sum_i p_i \sqrt{{\rm Tr} M^T(\tau_i) M(\tau_i)}\sqrt{ {\rm Tr} M^T(\varsigma_i) M(\varsigma_i)}
\label{eq:hold}.
\end{eqnarray}
From Lemma \ref{lem:M} it follows that ${\rm Tr} M^T M=2\sum_{i=1}^m |\lambda_i|^2  \leq 2m$ with equality iff $M$ is the correlation matrix of a pure Gaussian state. In order for equality to hold in Eq.~(\ref{eq:hold}), each $\tau_i$ (and $\varsigma_i$) must thus be a pure Gaussian state and therefore $\rho=\sum_i p_i \tau_i$ is convex-Gaussian.
\end{proof}

Including this extra separability constraint in  Program~\ref{prog:SExtSDP} breaks its SDP structure -- checking separability can in itself be cast as an SDP~\cite{DPS:long}, but this would give a nested SDP structure to the program deciding if a state is convex-Gaussian. A direct semi-definite relaxation of this criteria is obtained by employing the partial-transposition test for separability~\cite{PPT:peres,PPT:horodecki}, leading to the following SDP:
\begin{program}
\label{prog:PPTSDP}
\begin{tabular}[t]{ll}
Input:&$\rho \in \cc C_{2m}$\\
Body:&Is there $\rho_{\rm ext} \in \cc C_{2m}\otimes\cc C_{2m} $\\
&s.t. \begin{tabular}[t]{l}
$\tr_{2}\rho_{ext}=\rho$\\
$\tr \rho_{\rm ext}=1$\\
$\Lambda \rho_{ext}=0 $\\
$\rho_{\rm ext}\geq 0$\\
$\rho_{\rm ext}^{T_1}\geq 0$
\end{tabular}\\
Output:& \verb+yes+ (provide $\rho_{\rm ext}$) or \verb+no+
\end{tabular}
\end{program}
\noindent Here $T_1$ is the transposition map on the first tensor-factor. As for the previous program, a negative answer means that the state is not convex-Gaussian. For the converse,  it is likely that the PPT criterion together with the constraint that the extension lives in the Gaussian-symmetric subspace is not sufficient to enforce separability (e.g., there exists symmetric bound-entangled states~\cite{Geza09}), hence a positive answer is non-decisive, but we leave this as an open question.

%%%%%%%%%%%%%%%%%%%%%%%%%%%%%%%%%
%%%%% Apps
%%%%%%%%%%%%%%%%%%%%%%%%%%%%

\section{Applications: depolarized $\ket{a_8}$}
\label{sec:appl}

As mentioned in the introduction, one way to turn $\nu=5/2$ topological quantum computation into an universal quantum computation scheme, is to assume that one has access to the auxiliary states $\ket{a_4}\in \cc C_4$ and $\ket{a_8}\in \cc C_8$, as shown in~\cite{bravyi:uni_maj}. In the present section we want to assess the amount of noise that can be added to these extra resources before they turn convex-Gaussian. From Proposition~\ref{prop:123Gauss} we know that any pure state in $\cc C_4$ is Gaussian, and therefore any noisy version of $\ket{a_4}$ is convex-Gaussian. We thus concentrate on the state $\ket{a_8}$ when undergoing depolarization. We consider the state
\beq
\label{eq:depoA8}
\rho_{a_8}(p) = (1-p)\proj{a_8} + p I/16,
\eeq
with $0 \le p \le 1$ and 
\begin{equation}
\proj{a_8}=\frac{1}{16}(I+S_1)(I+S_2)(I+S_3)(I+Q),
\label{def:a8}
\end{equation}
with $S_1=-c_1 c_2 c_5 c_6$, $S_2=-c_2 c_3 c_6 c_7$, $S_3=-c_1 c_2 c_3 c_4$ and $Q=c_1 c_2 c_3 c_4 c_5 c_6 c_7 c_8$. Ref.~\cite{bravyi:uni_maj} has shown that $\rho_{a_8}(p)$ can be distilled for $p < 40\%$ (corresponding to the parameter $\epsilon_8$ defined in Ref.~\cite{bravyi:uni_maj} equal to $0.38$). We would like to determine the noise threshold $p_*$ such that  $\rho_{a_8}(p\geq p_*)$ is convex-Gaussian. Note that $\rho_{a_8}$ is Gaussian only for $p=1$. The results in this Section are summarized in Fig.\ref{fig:regions}.

\begin{figure}[h!tb]
\centerline{
\mbox{
\includegraphics[width=0.9\linewidth]{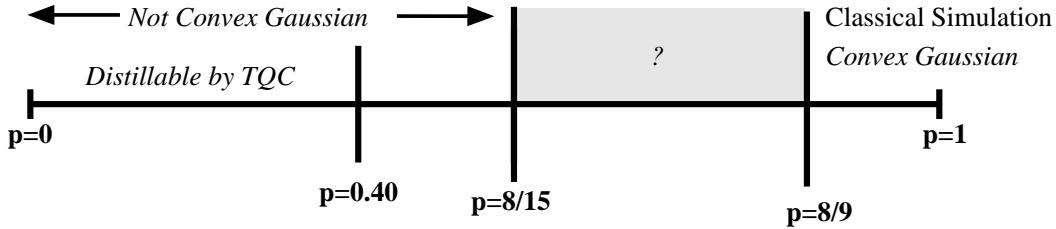}}}
\caption{Properties of the state $\rho_{a_8}(p)=(1-p)\proj{a_8}+p I/16$.  Our results show that $p_*$, the noise threshold above which $\rho_{a_8}(p)$ is convex-Gaussian is in the interval $[\frac{8}{15}, \frac{8}{9}]$, depicted in grey. In this interval we do not yet know whether $\rho_{a_8}$ is convex-Gaussian. For $p < 0.40$, Ref.~\cite{bravyi:uni_maj} shows that $\rho_{a_8}$ can be distilled by noiseless braiding operations to enable topological quantum computation (TQC) with Clifford group elements (for quantum universality one also needs a sufficiently low-noise $\ket{a_4}$ ancilla). Our results from the SDP on the existence of a Gaussian-symmetric extension of $\rho_{a_8}(p)$ for $p \geq 0.25$ do not provide additional information beyond these analytical results.}
\label{fig:regions}
\end{figure}

In order to give a lower-bound on $p_*$ we implemented the SDP Program~\ref{prog:SExtSDP} for fixed $n=2$ (see Appendix B) with the aid of the {\sc cvxopt} library for Sage~\cite{Sage}. The numerical results indicate that $p_*\ge 0.25$, as it is always possible to find a Gaussian-symmetric extension for $\rho_{a_8}(p)$ with $n=2$ for noise rates above this value. Our implementation of the feasibility SDP takes approximately one day for each value of $p$ in a computer with $2.3GHz$ processor, and takes about $30GB$ of RAM memory. The implementation of the SDP with an additional PPT condition on the extension is likely to provide stronger results, but it is  more computationally intensive. Given these limitations, we provide an alternative analysis of the properties of $\rho_{a_8}(p)$ using the notion of witnesses: Hermitian operators which have expectation values in a restricted range (e.~g.~ non-negative for all Gaussian states) whereas expectation values on non-Gaussian states can extend beyond this range (e.g.~be negative). The existence of such witness operators is guaranteed by the fact that they represent separating hyperplanes separating the convex-Gaussian states from all even Hermitian operators in ${\cal C}_{2m}$ (in addition, similar as in~\cite{DPS:long}, witnesses could be constructed based on a negative output of feasibility SDP).

Clearly, such Hermitian witness operators, say $W$, should include terms which are quartic or higher-weight correlators, as the expectation value of quadratic Hamiltonians only depends on the correlation matrix of a state. For $\rho_{a_8}(p)$ the obvious choice for such a witness operator is $W=\ket{a_8}\bra{a_8}$, for which ${\rm Tr} W \rho_{a_8}(p)=1-\frac{15p }{16}$. Let us understand how we can bound  $\min_{\psi_g} {\rm Tr} W \ket{\psi_g}\bra{\psi_g}=|\bra{\psi_g} a_8 \rangle|^2$ where $\psi_g$ is any Gaussian pure state. We will use the fact that the correlation matrix of $\ket{a_8}$ is the null matrix, i.e. for all Majorana operators $c_i, c_j$, ${\rm Tr}\, c_i c_j \ket{a_8}\bra{a_8}=0$, which follows directly from Eq.~(\ref{def:a8}). We will now prove that 
$|\bra{\psi_g} a_8 \rangle|^2 \leq \frac{1}{2}$ which shows that for $p < \frac{8}{15} \approx 0.53$, the state $\rho_{a_8}(p)$ is not convex-Gaussian. The following proposition proves our claimed bound and can be intuitively understood by interpreting the state $\ket{a_8}$ as a maximally entangled state while Gaussian states are product states:

\begin{proposition}
For all Gaussian pure states $\ket{\psi_g}$, we have $|\bra{\psi_g} a_8 \rangle|^2 \leq \frac{1}{2}$.
\label{prop:maxE}
\end{proposition}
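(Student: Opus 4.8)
The plan is to exploit the two structural facts about $\ket{a_8}$ made available in the text: its correlation matrix vanishes identically, and its projector is a product of commuting stabilizers $S_1,S_2,S_3,Q$. I would first write $|\bra{\psi_g}a_8\rangle|^2 = \tr\bigl(\proj{\psi_g}\proj{a_8}\bigr)$, expand $\proj{a_8}$ using Eq.~(\ref{def:a8}) into the $16$ stabilizer monomials, and use Wick's theorem, Eq.~(\ref{wick}), to evaluate each $\tr(\proj{\psi_g} C_S)$ as a Pfaffian of the submatrix $M|_S$ of the correlation matrix $M=M(\psi_g)$. Since $\psi_g$ is a \emph{pure} Gaussian state, Lemma~\ref{lem:M} gives $M^T M = I$, i.e.\ $M$ is a real orthogonal antisymmetric $8\times 8$ matrix, so its nonzero $\lambda_k$ all equal $\pm 1$. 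The one-body and three-body contractions vanish because $M$ is antisymmetric (odd-size Pfaffians are zero), so only the identity term, the three $S_i$ terms (weight-$4$ correlators), and the $Q$ term (weight-$8$ correlator) survive; the cross terms $S_iS_j$, $S_iQ$ would be weight-$8$ or weight-$4$ correlators but, because the $S_i$ overlap, they reduce to other weight-$\le 4$ correlators whose values are again Pfaffians of $M$. So
\begin{equation}
|\bra{\psi_g}a_8\rangle|^2 = \tfrac{1}{16}\Bigl(1 + \textstyle\sum_{\text{surviving }S}\ii^{k_S}\langle C_S\rangle\Bigr),
\end{equation}
a linear function of a handful of Pfaffian minors of $M$.

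The second step is to bound this linear functional over the manifold of orthogonal antisymmetric $M$. The key point is that $\pf(M|_S)$ for a size-$2k$ index set $S$ is a degree-$k$ polynomial in the entries of $M$, and for orthogonal $M$ the Pfaffian of any principal submatrix is bounded in $[-1,1]$ (indeed $\pf(M)^2=\det(M)=1$ for the full matrix, and for proper sub-blocks one gets the product of the relevant $\lambda$'s which have modulus $\le 1$). More useful is the interpretation suggested in the text: $\ket{a_8}$, after the Jordan–Wigner map, is (up to local Cliffords) a maximally entangled state across a suitable bipartition of the $4$ qubits, and a pure Gaussian fermionic state maps to a product state with respect to the mode bipartition that block-diagonalizes $M$. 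Concretely, I would choose the partition of the $8$ Majorana modes induced by the block-diagonal form of $M$, write $\psi_g$ in standard form (Eq.~(\ref{sform})) as a product $\bigotimes_{k=1}^4 \ket{\pm}_k$ of single-mode stabilizer states, and compute $|\langle \psi_g|a_8\rangle|^2$ as a product/convex combination of single- and two-mode overlaps, each at most $1/\sqrt2$ in the right places, yielding the $1/2$ ceiling. Since an arbitrary pure Gaussian state is obtained by a FLO rotation $R\in SO(8)$ of this standard form, and $\Lambda$ (hence the relevant geometry) is $U\otimes U$-invariant, it suffices to optimize over $R$; the supremum is attained at a discrete family and equals $1/2$.

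The main obstacle is the bookkeeping in the bound over $SO(8)$: unlike the true separable-vs-maximally-entangled bound for qubits, here the extreme points are \emph{Gaussian} pure states, parametrized by an $SO(8)$ rotation together with a sign vector $\vec\lambda\in\{-1,1\}^4$, and one has to show that no choice of rotation can align the product structure of $\psi_g$ with all four stabilizers $S_1,S_2,S_3,Q$ simultaneously — that is exactly where the "$1/2$" rather than "$1$" comes from. I expect the cleanest route is: (i) reduce to showing $\max_{\psi_g}\tr(\proj{\psi_g}\proj{a_8}) \le 1/2$ via the Pfaffian expansion above, (ii) observe that the surviving correlators are precisely the expectations of $S_1,S_2,S_3,Q$ and their products, which generate an abelian group whose common $+1$ eigenspace is $1$-dimensional (that is $\ket{a_8}$ itself), (iii) note a Gaussian $\psi_g$ cannot be a simultaneous $+1$ eigenstate of all of $S_1,S_2,S_3,Q$ because $\ket{a_8}$ is not Gaussian (its $M=0$ but it is pure and not maximally mixed, contradicting $M^TM=I$), and then quantify this: if $\psi_g$ saturates the eigenvalue $+1$ on, say, three of the four stabilizers it must have eigenvalue $0$ on the fourth in the relevant sense, capping the overlap at $1/2$. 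Turning step (iii) from a qualitative obstruction into the sharp constant $1/2$ is the delicate part, and I would do it by direct evaluation of the Pfaffians in the block-diagonal basis, where the computation becomes a small finite optimization.
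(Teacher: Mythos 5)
Your proposal assembles the right raw materials --- $M(\ket{a_8})=0$, the $SO(8)$ freedom in choosing the Majorana basis, and the fact that a pure Gaussian state is a ``product state'' in the basis that block-diagonalizes its correlation matrix --- but it does not actually establish the bound; it defers exactly the step that constitutes the entire content of the proposition. Your first route (expand $\proj{a_8}$ into the $16$ stabilizer monomials and bound the resulting sum of Pfaffian minors over all antisymmetric orthogonal $M$) is much harder than you suggest: the stabilizers $S_1,S_2,S_3,Q$ are fixed in the original $c_i$ basis, while the block-diagonal form of $M(\psi_g)$ lives in a rotated basis $\tilde c = R^T c$, so the relevant Pfaffians are \emph{not} products of the $\lambda_k$'s and the problem does not reduce to ``a small finite optimization'' --- it remains a genuine optimization over the $28$-dimensional $SO(8)$. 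Bounding each minor by $1$ (which is all your orthogonality observation gives) yields only $\frac{1}{16}(1+15)=1$. Your step (iii) (``if $\psi_g$ saturates $+1$ on three stabilizers it must have eigenvalue $0$ on the fourth in the relevant sense'') is precisely the claim that needs proof, and you acknowledge as much.

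The paper's argument is far more economical and you should see why it closes the gap in one stroke: choose the Majorana basis $\{\tilde c_{2k-1},\tilde c_{2k}\}_{k=1}^4$ that block-diagonalizes $M(\psi_g)$, so that $\ket{\psi_g}$ is one element $\ket{x}$ of the Gaussian product basis of simultaneous eigenstates of the commuting operators $\ii\tilde c_{2k-1}\tilde c_{2k}$. Writing $\ket{a_8}=\sum_x \alpha_x\ket{x}$, the single fact that ${\rm Tr}\,(\ii \tilde c_1\tilde c_2\, \proj{a_8})=0$ (which holds in \emph{every} basis because $M(\ket{a_8})=0$ is preserved under $M\mapsto RMR^T$) gives $\sum_y|\alpha_{0y}|^2=\sum_y|\alpha_{1y}|^2=\tfrac12$, and normalization then forces $|\alpha_x|^2\le\tfrac12$ for every $x$, in particular for $x$ corresponding to $\psi_g$. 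No Wick expansion of the quartic and octic correlators and no optimization over $SO(8)$ is needed --- one vanishing quadratic correlator in the adapted basis suffices. I would encourage you to rework your write-up around this observation; as it stands, the proposal is an outline with the decisive inequality unproven.
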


\begin{proof}
Let $c_1$ and $c_2$, $c_1 c_2=-c_2 c_1$, be two arbitrary Majorana fermion operators, from a complete set $\{c_{2k-1},c_{2k}\}_{k=1}^4$ (note the $SO(8)$ freedom in choosing these). Let $\ket{x}$ denote an eigenvector of the mutually commuting operators $\ii c_1 c_2$, $\ii c_{2k-1} c_{2k}$ for $k=2,\ldots 4$ where the bits (0 or 1) of $x$ label the eigenvalues (resp. $+1$ and $-1$) of these operators. The Gaussian states $\ket{x}$ form a basis for the $4$-fermion (or 4-qubit) space, hence we can write $\ket{a_8}=\sum_{x \in \{0,1\}^4} \alpha_x \ket{x}$ with $\sum_{x} |\alpha_x|^2=1$. Then 
\begin{equation}
{\rm Tr}\, \ii c_1 c_2 \proj{a_8}=\sum_{y\in \{0,1\}^3} |\alpha_{0y}|^2-\sum_{y \in \{0,1\}^3} |\alpha_{1y}|^2=0.
\end{equation}
This, together with normalization, implies that $\sum_y |\alpha_{0y}|^2=\sum_y |\alpha_{1y}|^2=\frac{1}{2}$, hence for all $y$, $|\alpha_{0y}|^2$ and $|\alpha_{1y}|^2$ are less than or equal to $1/2$.
\end{proof}

A tighter bound than the one in Proposition \ref{prop:maxE} cannot be excluded if one uses further properties of the state $\ket{a_8}$ -- such tighter bound would lead to a reduction of the grey area in Fig.\ref{fig:regions}.

To give an upper-bound on $p_*$ we construct explicit Gaussian decompositions of $\rho_{a_8}(p)$ for large values of $p$.  To do this we consider a subset of the convex-Gaussian states that share some key properties with $\rho_{a_8}(p)$, namely: (i)  zero correlation matrix, and (ii) only a small fraction of all possible correlators has non-zero coefficients.

To exploit these symmetries we define three types of states:
\beq
\rho_i({\lambda}) = \frac{1}{2} \left(\rho_{M_i({\lambda})} + \rho_{M_i(- {\lambda})}\right),
\eeq
where $\rho_{M_i ({\lambda})}$, for $i=1,2,3$, is the Gaussian state generated by the correlation matrix $M_i({\lambda})$, with ${\lambda}\in [-1,1]^4$, and $M_i({\lambda})$ assumes one of the three following forms:
\beqa
M_1({\lambda}) = \bigoplus_{i=1}^{4}\left( \begin{array}{cc}
							0&\lambda_i\\
							-\lambda_i & 0
							\end{array}\right),\nonumber\\
							M_2({\lambda}) = \bigoplus_{i=1}^{2}\left( \begin{array}{cccc}
														0&0&\lambda_i &0\\
														0&0&0&\lambda_{i+1}\\
														-\lambda_i &0&0&0\\
														0&-\lambda_{i+1}&0&0
														\end{array}\right), \nonumber \\
														M_3({\lambda}) = \bigoplus_{i=1}^{2} \left(\begin{array}{cccc}
																				0&0&0&\lambda_i \\
																				0&0&\lambda_{i+1}&0\\
																				0&-\lambda_{i+1}&0&0\\
																				-\lambda_{i}&0&0&0
																				\end{array}\right).\nonumber
\eeqa
By construction, these states are convex-Gaussian, have null correlation matrix, and any convex combination of them has an expansion in terms of Majorana operators with non-vanishing coefficients only on the same correlators as $\rho_{a_8}(p)$. The aim is to find for which $p$'s it is possible to decompose  $\rho_{a_8}(p)$ as a convex sum of these type of convex-Gaussian states.

In order to describe the convex hull of these states, first note that $M_2$ and $M_3$ are related to $M_1$ by   permutations (different choices of pairings):
\beqa
M_2(\lambda)= \left(P_{12}\oplus P_{12}\right) M_1(\lambda) \left(P_{12}\oplus P_{12}\right)^T, \nonumber \\ M_3(\lambda)= \left(P_{13}\oplus P_{13}\right) M_1(\lambda) \left(P_{13}\oplus P_{13}\right)^T,
\nonumber
\eeqa
where
\beq
P_{12}=\left(\begin{array}{cccc}
		1&0&0&0\\
		0&0&1&0\\
		0&1&0&0\\
		0&0&0&1
		\end{array}\right),
						P_{13}=\left(\begin{array}{cccc}
						1&0&0&0\\
						0&0&0&1\\
						0&1&0&0\\
						0&0&1&0
						\end{array}\right).
\eeq
Given that the transformation connecting the correlation matrices is formed by the direct sum of identical permutations, this transformation is in $SO(8)$ -- one always gets the signature of the permutation squared, and therefore the corresponding determinant is always 1. 
Rotations in $SO$ of the correlation matrix  induce unitary transformations on the level of the states~\cite{bravyi:FLO}: $P_{12} \mapsto U_2$, $P_{13} \mapsto U_3$. Bearing  in mind  this connection among the three classes of states above, we now determine the extreme points of the set of type 1 states.

\begin{lemma}[Extreme points for type 1 states] Let $\overline{\cc S}_1 = \conv\{\rho_1(\lambda) | \lambda\in [-1,1]^4\}$. Then the extreme points of $\overline{\cc S}_1$ are given by
$$\phi_1(x) \doteq \frac{1}{2} \left( \proj{x} + \proj{\neg x}\right),$$
with $x \in \{0,1\}^4$ and $\neg x$ is the bitwise negation of $x$.
\end{lemma}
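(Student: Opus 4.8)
The plan is to identify $\overline{\cc S}_1$ with a concrete polytope of diagonal states and then read off its vertices. First I would recall that each $\rho_{M_1(\lambda)}$ is, by the standard form~(\ref{sform}), the Gaussian state $\frac{1}{16}\prod_{k=1}^4(I+\ii\lambda_k c_{2k-1}c_{2k})$, which is diagonal in the common eigenbasis $\{\ket{x}\}_{x\in\{0,1\}^4}$ of the mutually commuting operators $\ii c_{2k-1}c_{2k}$; explicitly $\bra{x}\rho_{M_1(\lambda)}\ket{x}=\frac{1}{16}\prod_{k}(1+(-1)^{x_k}\lambda_k)$. Hence $\rho_1(\lambda)=\frac12(\rho_{M_1(\lambda)}+\rho_{M_1(-\lambda)})$ is the diagonal state whose $\ket{x}$-weight is $\frac{1}{32}\big(\prod_k(1+(-1)^{x_k}\lambda_k)+\prod_k(1-(-1)^{x_k}\lambda_k)\big)$, which manifestly satisfies the symmetry $p_x=p_{\neg x}$ because flipping all bits of $x$ swaps the two products. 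So $\overline{\cc S}_1$ is contained in the set $D$ of diagonal (in the $\ket{x}$-basis) density matrices with $p_x=p_{\neg x}$.

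The next step is to show the reverse inclusion, i.e.\ that every such symmetric diagonal state lies in $\conv\{\rho_1(\lambda)\}$; together with the first step this gives $\overline{\cc S}_1=D$. For this I would show that the $\phi_1(x)=\frac12(\proj{x}+\proj{\neg x})$ are themselves attained as $\rho_1(\lambda)$ for the corner choices $\lambda\in\{-1,1\}^4$: taking $\lambda_k=(-1)^{x_k}$ makes $\prod_k(1+(-1)^{x'_k}\lambda_k)$ equal $16$ when $x'=x$ and $0$ otherwise, so $\rho_{M_1(\lambda)}=\proj{x}$ and $\rho_{M_1(-\lambda)}=\proj{\neg x}$, giving $\rho_1(\lambda)=\phi_1(x)$. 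Since an arbitrary point of $D$ is a convex combination $\sum_{\text{pairs }\{x,\neg x\}} q_{\{x,\neg x\}}\,\phi_1(x)$ with $q\ge 0$ summing to one (there are eight such pairs), it is a convex combination of the $\phi_1(x)$, hence lies in $\overline{\cc S}_1$. Thus $\overline{\cc S}_1=D$ and $D$ is a polytope whose description is "nonnegative, normalized, $p_x=p_{\neg x}$".

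Finally I would argue that the $\phi_1(x)$ are exactly the extreme points of $D$. They are extreme because each $\phi_1(x)$ is, up to the forced pairing symmetry, a vertex of the probability simplex: if $\phi_1(x)=\tfrac12(\sigma+\tau)$ with $\sigma,\tau\in D$, then $\sigma,\tau$ are supported on $\{x,\neg x\}$ and, being symmetric under $x\leftrightarrow\neg x$, must each equal $\phi_1(x)$. Conversely any element of $D$ with support on two or more pairs is a nontrivial convex combination of the corresponding $\phi_1$, hence not extreme. Therefore $\mathrm{ext}(\overline{\cc S}_1)=\{\phi_1(x):x\in\{0,1\}^4\}$, which is the claim.

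The main obstacle is the reverse inclusion $D\subseteq\overline{\cc S}_1$: one must be sure that restricting to the one-parameter-per-mode family $\rho_1(\lambda)$ (rather than allowing independent mixing of the eight $\ket{x}$-pairs from the outset) still sweeps out the whole symmetric diagonal polytope. The observation that the eight $\phi_1(x)$ are individually realized by corner values $\lambda\in\{-1,1\}^4$ resolves this cleanly, since convexity of $\overline{\cc S}_1$ then does the rest; the rest of the argument is bookkeeping with the product formula for the diagonal weights.
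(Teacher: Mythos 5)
Your proposal is correct and rests on the same computation as the paper's proof: expanding $\rho_1(\lambda)$ in the common eigenbasis of the $\ii c_{2k-1}c_{2k}$ via the product formula, so that every $\rho_1(\lambda)$ is manifestly a convex combination of the $\phi_1(x)$, which are themselves attained at the corner values $\lambda\in\{-1,1\}^4$. Your version merely packages this through the polytope of pairing-symmetric diagonal states and spells out the extremality of the mutually orthogonal $\phi_1(x)$, steps the paper leaves implicit in its "follows immediately from the expansion."
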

\begin{proof}
It follows immediately from the expansion of the state $\rho_1(\lambda)$:
\beqa
\rho_1(\lambda) &=&\frac{1}{2^5}\sum_{x\in \{0,1\}^4} \left[   \prod_{k=1}^4 \left(1 - (-1)^{x_k}\lambda_k\right) +  \prod_{k=1}^4 \left(1 - (-1)^{x_k}(-\lambda_k)\right) \right] \proj{x}\nonumber\\
&=&\frac{1}{2^5} \sum_{x\in \{0,1\}^4} \prod_{k=1}^4 \left(1 - (-1)^{x_k}\lambda_k\right)\left( \proj{x} +\proj{\neg x}\right)\\
&=&\frac{1}{2^4} \sum_{x\in \{0,1\}^4} \prod_{k=1}^4 \left(1 - (-1)^{x_k}\lambda_k\right)\phi_1(x).\nonumber
\eeqa
\end{proof}

Defining in a similar fashion $\overline{\cc S}_2$, $\overline{\cc S}_3$, $\phi_2(x)$ and $\phi_3(x)$, we can then define the convex hull of any convex combination of the three types of states as $\overline{\cc S} = \conv\{\overline{\cc S}_1,\overline{\cc S}_2,\overline{\cc S}_3\}$. More explicitly: 
\beq
\overline{\cc S}= \left\{\sum_{i=1}^3 \sum_{x=0}^7 \gamma_i(x) \phi_i(x) \Big | \; \gamma_i(x)\ge 0,  \sum_{i=1}^3 \sum_{x=0}^7  \gamma_i(x)=1\right \},\nonumber
\eeq
where the summation in $x$ is to be carried out over its binary expansion in three bits.

The problem now reduces to determine for which values of $p\in[0,1]$ the linear system $\rho_{a_8}(p) = \sum_{i=1}^3 \sum_{x=0}^7 \gamma_i(x) \phi_i(x)$, with $\{\gamma_i\}$ a probability distribution, has a solution. Simple inspection shows that such Gaussian decomposition is always possible whenever $p\ge 8/9 \approx 0.89$, which is then an upper-bound on $p_*$.

%%%%%%%%%%%%%%%%%%%%%%%%%
%%%%%%% discussion
%%%%%%%%%%%%%%%%%%%%%
\section{Discussion:  Classical simulation of fermionic quantum computation}
\label{sec:discussion}

Contrary to bosonic linear optics, quantum computations based on fermionic linear optics (FLO) can be efficiently simulated by a classical computer even when augmented by measurements of  fermionic number operators~\cite{BarbaraFLO,JozsaMiyake,bravyi:FLO}. The simulation relies on the following facts:
\begin{enumerate}
\item efficient encoding of Gaussian states -- a Gaussian state of $2m$ Majorana fermions is fully described by its correlation matrix $M$, with $O(m^2)$ elements.
\item FLO transformations map Gaussian states onto Gaussian states, with efficient update rule -- every FLO transformation $V\in \cc C_{2m}$ induces a rotation $R\in SO(2m)$ on the Majorana operators space, $V c_i V^\dagger = \sum_{j=1}^{2m} R_{ij} c_j$. This, in turn, induces the map $M\mapsto RMR^T$ on the $2m\times 2m$ correlation matrix, and this update can be evaluated in $O(m^3)$ steps.
\item efficient read-out of measurement probability distributions -- via Wick's theorem~(\ref{wick}) the probability of measuring the population in fermionic modes can be done efficiently, as the Pfaffian of a $2m\times 2m$ matrix can be evaluated in $O(m^3)$. Furthermore, number operator measurements project Gaussian states onto Gaussian states.
\end{enumerate}

With these three ingredients, the evolution of any initial Gaussian state can be efficiently followed by a classical computer at any time in the computation, and the output distribution can be exactly evaluated. In references~\cite{bravyi:FLO,BravyiKoenig} the allowed transformations in (ii)   were extended to noisy \emph{Gaussian maps}. These are completely-positive channels, not necessarily FLO unitaries, that map Gaussian states onto Gaussian states, and as such still admit efficient classical simulation.

Our results imply that if $p\ge 8/9$, $\rho_{a_8}(p)$ is convex-Gaussian and this allows for the classical simulation of the computation as follows. Given a noise strength $p\ge 8/9$, one finds the decomposition of $\rho_{a_8}(p)$ in $\overline{\cc S}$, by solving the linear system $\rho_{a_8}(p)=\sum_{i=1}^3\sum_{x=0}^7 \gamma_{i}(x) \phi_i(x)$ -- this has to be done only once for a fixed $p$. As each $\phi_i(x)$ is in itself a convex sum of two pure Gaussian states, this leads to a decomposition of $\rho_{a_8}(p)$ in terms of at most 48 pure Gaussian states. Let   $p_i$ be the probability associated with the $i$-th pure Gaussian state in such decomposition. Then, whenever a state $\rho_{a_8}(p)$ is requested in the computation, one samples from $\{p_i\}$ and chooses the corresponding pure Gaussian state. From this point onwards, the simulation follows the scheme summarized in  (i), (ii) and (iii) above.

\medskip

The results in Fig.~\ref{fig:regions} leave some gaps in our understanding, in particular about the precise value of $p_*$. In order to improve the lower bounds on $p_*$ one could consider distillation processes beyond the restricted set of braiding operations, but that still map Gaussian states onto Gaussian states, leaving thus the set of convex-Gaussian states unchanged. The distillation protocol of depolarized $GHZ_4$ states, which are isomorphic to $\rho_{a_8}(p)$, proposed by D\"ur and Cirac at first sight suggests that our classical simulation threshold is tight, as it can distill a perfect $GHZ_4$ provided that $p<8/9$. Nevertheless, their protocol uses non-FLO operations, as it assumes that one can locally create the $GHZ_4$ and uses distilled 2-qubits maximally entangled pairs for distribution. The protocol devised by Murao \emph{et al.} in~\cite{murao}, directly distills $GHZ_4$ from depolarized copies of it for $p<0.7328$. Despite the fact that this direct approach seems more related to our question, it employs gates that are not immediately translated into braiding operations or even to FLO transformations. Although the choice to distill via braiding operations is  physically motivated -- due to their topological protection--, a different distillation threshold from FLO operations would suggest that either Bravyi's distillation protocol can be improved to higher depolarizing noise rates, or that there exists a `transition zone' in which the computation by braiding of Majorana fermions is neither quantumly-universal nor can be efficiently simulated classically.

%%%%%%%%%%%%%%%%%%%%%%%%
%%%% Acknowledegments
%%%%%%%%%%%%%%%%%%%%%%

\section*{Acknowledgements}
\addcontentsline{toc}{section}{Acknowledgements}
We would like to acknowledge fruitful discussions with Akimasa Miyake, Ronald de Wolf, Sergey Bravyi, and Volkher Scholz. We also would like to thank Maarten Dijkema for  IT support. F. d-M. is supported by the Vidi grant 639.072.803 from the Netherlands Organization for Scientific Research (NWO).

%%%%%%%%%%%%%%%%%%%%%%%%%%%%%%%%%%%
%%%%%%%%%%%%% Appendixes
%%%%%%%%%%%%%%%%%%%%%%%%%%%%%

\setcounter{section}{1}
\appendix

\section{Symmetric versus Gaussian-symmetric Subspace}
%\addcontentsline{toc}{section}{Appendix A: Symmetric versus Gaussian-symmetric Subspace}
\label{proof:twirl}

Inspired by~\cite{twirl:werner}, we define a `FLO twirl' as the map ${\cal S}(\rho)=\frac{1}{{\rm Vol}(U)}\int_{FLO} dU U \otimes U \rho U^{\dagger} \otimes U^{\dagger}$ where $\int_{FLO} dU$ is defined by taking the Haar measure over the real orthogonal matrices $R$ induced by $U$. The map ${\cal S}(\rho)$ is normalized such that it is trace-preserving.

\begin{lemma}
The projector onto the null-space of $\Lambda=\sum_i c_i \otimes c_i$ is $\Pi_{\Lambda=0}={2m \choose m} {\cal S}(\ket{0,0}\bra{0,0})$ where $\ket{0}$ is a (Gaussian) vacuum state with respect to some set of annihilation operators $a_i$. Thus the states $\ket{\psi,\psi}$ where $\psi$ is a pure fermionic Gaussian state span the null-space of $\Lambda$ which implies that the null-space of $\Lambda$ is a subspace of the symmetric subspace.
\label{lem:gauss_sym}
\end{lemma}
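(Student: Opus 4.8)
The plan is to prove the operator identity $\Pi_{\Lambda=0}={2m\choose m}\,\mathcal{S}(\ket{0,0}\bra{0,0})$ and then read off both the spanning statement for $\ker\Lambda$ and its containment in the symmetric subspace. First I would collect three soft facts about the twirl $\mathcal{S}$. Conjugation by $U\otimes U$ preserves the trace, so $\mathcal{S}$ is trace-preserving. Since $Uc_iU^{\dagger}=\sum_j R_{ij}c_j$ with $R$ a real orthogonal matrix, $\sum_i(Uc_iU^{\dagger})\otimes(Uc_iU^{\dagger})=\sum_{j,k}(R^TR)_{jk}\,c_j\otimes c_k=\Lambda$, so $\Lambda$ commutes with every $U\otimes U$; hence $\ker\Lambda$ is $U\otimes U$-invariant and, as for any twirl, $\mathcal{S}(\rho)$ commutes with every $U\otimes U$. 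Finally $\ket 0$ is a pure (Gaussian vacuum) state, so Corollary~1 gives $\Lambda\ket{0,0}=0$. Putting these together, $Y:=\mathcal{S}(\ket{0,0}\bra{0,0})$ is a positive, unit-trace operator whose range lies in $\ker\Lambda$ and which commutes with every $U\otimes U$.

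The heart of the argument is that $Y\propto\Pi_{\Lambda=0}$. I would obtain this from Schur's lemma, once the representation $R\mapsto(U_R\otimes U_R)|_{\ker\Lambda}$ of the orthogonal group $O(2m)$ is known to be \emph{irreducible}: then any operator commuting with it and supported on $\ker\Lambda$ is a multiple of $\Pi_{\Lambda=0}$, and taking traces together with $\dim\ker\Lambda={2m\choose m}$ (already established in the main text from the mutually commuting eigenprojectors $P_{\vec{\mu}}$ with $\sum_i\mu_i=0$) forces the multiple to be $1/{2m\choose m}$. For the irreducibility I would use the standard isomorphism of $SO(2m)$-modules $\Cx^{2^m}\otimes\Cx^{2^m}\cong{\rm End}(\Cx^{2^m})\cong\bigoplus_{k=0}^{2m}\Lambda^k(\Rl^{2m})$ (the spinor bimodule, valid because $\Delta\cong\Delta^{*}$ for $\mathrm{Spin}(2m)$). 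Under it, $\ker\Lambda$ — an $SO(2m)$-submodule of dimension ${2m\choose m}$ — must be the middle degree $\Lambda^m(\Rl^{2m})$; the one delicate point is to confirm ``middle degree'' rather than some other submodule of the same dimension, which one does by computing the eigenvalue of $\Lambda$ on each graded piece (it acts as an appropriately charge-conjugation-twisted two-sided Clifford multiplication, with eigenvalue a nonzero multiple of $2m-2k$ on degree $k$, vanishing exactly at $k=m$). And $\Lambda^m(\Rl^{2m})$ is $O(2m)$-irreducible because it splits under $SO(2m)$ into the two non-isomorphic half-form modules $\Lambda^m_{\pm}$ (the eigenspaces of the Hodge star), which are exchanged by any orientation-reversing isometry.

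Granting $Y=\frac{1}{{2m\choose m}}\Pi_{\Lambda=0}$, the conclusions are immediate. Writing $Y=\frac{1}{{\rm Vol}}\int_{O(2m)}\!dR\,\ket{U_R0,U_R0}\bra{U_R0,U_R0}$, the range of $\Pi_{\Lambda=0}$, i.e.\ $\ker\Lambda$, is the span of the vectors $\ket{U_R0}\otimes\ket{U_R0}$. Every pure Gaussian state equals $U_R\ket0$ for some $R\in O(2m)$ — here one really needs the full orthogonal group, not just $SO(2m)$, since rotations preserve the vacuum's parity sector and would otherwise miss the odd-parity Gaussian states — so $\ker\Lambda={\rm span}\{\ket{\psi,\psi}:\psi\text{ pure Gaussian}\}$. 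Each $\ket{\psi,\psi}$ is SWAP-invariant, hence this span sits inside ${\bf Sym}^2(\Cx^{2^m})={\rm span}\{\ket{\phi,\phi}:\phi\in\Cx^{2^m}\}$, which is the remaining assertion.

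The step I expect to cost the most work, and to require the most care, is the irreducibility in the second paragraph, together with the bookkeeping about which orthogonal group is in play. On the level of states the determinant-$(-1)$ elements of $O(2m)$ are realized by composing an $SO(2m)$ FLO unitary with a parity-\emph{odd} Clifford conjugation such as $c_j(\,\cdot\,)c_j$, and one must check that $\Lambda$ stays invariant under these and that $\mathcal{S}$ still maps even operators to even operators (it does: an odd $U$ anticommutes with $C_{\rm all}$ whereas an even $\rho$ commutes with it). If one wishes to sidestep the exterior-algebra identification, at least for $m\le 3$ there is a cheaper route: by Proposition~\ref{prop:123Gauss} every definite-parity pure state is Gaussian, so ${\rm span}\{\ket{\psi,\psi}:\psi\text{ even Gaussian}\}={\bf Sym}^2\Delta^+$ and likewise in the odd sector, and since $\dim({\bf Sym}^2\Delta^+\oplus{\bf Sym}^2\Delta^-)={2m\choose m}=\dim\ker\Lambda$ the containment $\ker\Lambda\supseteq{\rm span}\{\ket{\psi,\psi}\}$ already forces equality — but for general $m$ the exterior-power description is the one to push through.
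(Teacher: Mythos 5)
Your proposal is correct in outline, but it reaches the projector identity by a genuinely different route than the paper. The paper's proof observes that both $\Pi_{\Lambda=0}$ and ${2m \choose m}\,{\cal S}(\ket{0,0}\bra{0,0})$ are $U\otimes U$-invariant, invokes the (externally cited) fact that the only such invariants are linear combinations of $\Lambda^i$, $i=0,\ldots,2m$, and then matches traces against this basis: ${\rm Tr}\,\Lambda^i\,{\cal S}(\ket{0,0}\bra{0,0})={\rm Tr}\,\Lambda^i\ket{0,0}\bra{0,0}=0$ for $i\geq 1$, with the $i=0$ trace fixing the prefactor. You instead prove irreducibility of the representation restricted to $\ker\Lambda$ and apply Schur's lemma, identifying $\ker\Lambda$ with the middle exterior power $\Lambda^m(\Rl^{2m})$ inside the spinor bimodule and checking that $\Lambda$ acts as $\pm(2m-2k)$ on degree $k$; this is sound (the two-sided Clifford multiplication computation gives exactly $(-1)^k(2m-2k)$, and $O(2m)$-irreducibility of the middle degree via the swapped half-form modules is standard), and it has the virtue of replacing the paper's unproved invariant-theory input with an explicit, self-contained representation-theoretic argument — at the cost of importing the $\Delta\otimes\Delta\cong\bigoplus_k\Lambda^k$ machinery, whose charge-conjugation twist you correctly flag as the delicate bookkeeping step. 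Your insistence on twirling over the full orthogonal group rather than $SO(2m)$ is not pedantry but a genuine correction: with only $SO(2m)$ the orbit of $\ket{0,0}$ stays in the even-parity sector, the twirled state has rank at most $2^{2m-2}<{2m\choose m}$, and the identity fails already at $m=1$ (where ${\cal S}_{SO}(\ket{0,0}\bra{0,0})=\ket{0,0}\bra{0,0}$ while $\ker\Lambda$ is two-dimensional); moreover $C_{\rm all}\otimes I$ is an $SO$-invariant that is not a polynomial in $\Lambda$, so the paper's cited classification of invariants also implicitly requires the parity-odd transformations such as conjugation by $c_j$. Your dimension-counting shortcut via Proposition~\ref{prop:123Gauss} for $m\le 3$, and your observation that it must fail for $m\ge 4$ because $2^{m-1}(2^{m-1}+1)>{2m\choose m}$ there, are both correct.
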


\begin{proof}
In order to prove that $\Pi_{\Lambda=0}={2m \choose m} {\cal S}(\ket{0,0}\bra{0,0})$, we note that both the l.h.s. and the r.h.s. are $U \otimes U$-invariant where $U$ is any FLO transformation. Thus instead of considering whether ${\rm Tr} X \Pi_{\Lambda=0}={\rm Tr}(X {2m \choose m} {\cal S}(\ket{0,0}\bra{0,0}))$ for any $X$, we can just consider the trace with respect to invariant objects ${\cal S}(X)$.  It can be shown that $\Lambda^i$ for $i=0,1,2,\ldots,2m$ (and linear combinations thereof) are the only invariants under the group $U \otimes U$ where $U$ is FLO transformation \cite{scholz:priv}. Clearly, ${\rm Tr} \Lambda^i {2m \choose m} {\cal S}(\ket{0,0}\bra{0,0})=0={\rm Tr} \Lambda^i \Pi_{\Lambda=0}$ for all $i \neq 0$ while ${\rm Tr} \Pi_{\Lambda=0}={2m \choose m}$ fixes the overall prefactor. Having established the form of the projector, it follows directly that the states $\ket{\psi,\psi}$ for any Gaussian $\psi$ span the null-space (Assume this is false and hence a state in the null-space $\ket{\chi}=\ket{\chi_{in}}+\ket{\chi_{out}}$ where $\ket{\chi_{in}}$ is in the span of $\ket{\psi,\psi}$ while $\ket{\chi_{out}}$ is w.l.o.g. orthogonal to any $\ket{\psi,\psi}$. We have $\Pi_{\Lambda=0}\ket{\chi}=\ket{\chi}$ while ${2m \choose m} {\cal S}(\ket{0,0}\bra{0,0}) \ket{\chi}=\ket{\chi_{in}}$ arriving at a contradiction.) As $\ket{\psi,\psi}$ for Gaussian pure states $\psi$ span the null-space, and $P \ket{\psi,\psi}=\ket{\psi,\psi}$ with $P$ the SWAP operator, the null-space is a subspace of the symmetric subspace.
\end{proof}

\section{SDP to detect non-convex-Gaussian states}
%\addcontentsline{toc}{section}{Appendix B: SDP to detect non-convex-Gaussian states}
\label{app:SDP}

Recall the series of tests constructed to detect that if a state $\rho \in \cc C_{2m}$ is not convex-Gaussian:
\begin{quote}
\begin{tabular}{ll}
Input:&$\rho \in \cc C_{2m}$ and an integer $n\ge 2$\\
Body:&Is there $\rho_{\rm ext} \in \cc C_{2m}^{\otimes n}$\\
&s.t. \begin{tabular}[t]{l}
$\pi \rho_{\rm ext} \pi^\dagger = \rho_{\rm ext}$, $\forall \pi \in S_n$\\
$\tr_{2,\ldots, n}\rho_{ext}=\rho$\\
$\tr \rho_{\rm ext}=1$\\
$\Lambda^{k,l} \rho_{ext}=0, \forall k \neq l$\\
$\rho_{\rm ext}\geq 0$
\end{tabular}\\
Output:& \verb+yes+ (provide $\rho_{\rm ext}$) or \verb+no+
\end{tabular}
\end{quote}
Note that we have explicitly kept the permutation-symmetry constraint in this program.

In this Appendix we convert this problem  into a feasibility semi-definite programming problem. To do that we must show that the above problem can be cast as the standard form of SDP's, namely:
\begin{quote}
\begin{tabular}{ll}
minimize& $\bb c^T \bb x$\\
subject to& $F_0 + \sum_i x_i F_i \geq 0$\\
&$A \bb x = \bb b$
\end{tabular}
\end{quote}
where $\bb x \in \Rl^d$ is the unknown vector to be optimized over, $\bb c$ is a  given vector in $\Rl^d$, and  $\{F_i\}_{i=0,\ldots, d}$ are given symmetric matrices. For the equality constraint, $A \in \Rl^{p\times d}$ is a given matrix, with ${\rm rank}(A)=p$, and $\bb b \in \Rl^p$. The restriction on the rank of $A$ demands that the linear system has at least one solution, and all the rows are linearly independent.

The first step is to associate an Hermitian operator $M_j$ with each term $\ii^k c_{a_1} c_{a_2}\ldots c_{a_{2k}}$ in the expression~(\ref{anyX}). The set $\{M_j\}_{0\le j \le 2^{2m -1}-1}$ spans  all the even Hermitian operators of $\cc C_{2m}$. Then any state $\rho\in \cc C_{2m}$ can be written as:
\beq
\rho = \sum_{k=0}^{2^{2m -1}-1} \alpha_k M_k,
\eeq
and we choose $M_0=I$.

For definiteness, in what follows we restrict to the case $n=2$. Extensions to larger $n$'s  follow the same structure and can be immediately constructed. That said, a general extension on $\cc C_{2m}\otimes \cc C_{2m}$ can be written as:
\beq
\rho_{\rm ext} = \sum_{i,j=0}^{2^{2m -1}-1} \beta_{ij}M_i\otimes M_j,
\eeq
with the coefficients $\beta_{ij}\in \Rl$ to be fixed by the constraints of our problem. The symmetry constraint can be taken directly into the parametrization of $\rho_{\rm ext}$ by imposing:
\beq
\rho_{\rm ext} = \sum_{i=0}^{2^{2m -1}-1} \beta_{ii}M_i\otimes M_i + \sum_{0\le i<j\le 2^{2m -1}-1} \beta_{ij}(M_i\otimes M_j + M_j\otimes M_i).
\eeq
At this point, the initially $2^{4m -2}$ coefficients of $\rho_{\rm ext}$ are reduced to $2^{2m-2}(1+2^{2m-1})$.

The imposition that $\tr_1(\rho_{\rm ext}) = \rho$ further demands:
\beq
2^m \sum_{j=0}^{2^{2m-1}-1} \beta_{0,j} M_j = \sum_{j=0}^{2^{2m-1}-1} \alpha_j M_j.
\eeq
Therefore, $\beta_{0,j}=\alpha_j/2^{m}$. In this way, further $2^{2m-1}$ coefficients are determined, and thus remain $2^{2m -3}(2^{2m}-2)$ free parameters. Note that the normalization of $\rho_{\rm ext}$ is already guaranteed as long as $\tr(\rho)=\alpha_0 =1$. 

The question whether there exists an assignment of the remaining free coefficients respecting the last two constraints  can be immediately  posed as a feasibility SDP:
\begin{quote}
\begin{tabular}{ll}
minimize & $0$\\
subject to&$\rho_{\rm ext}\geq 0$\\
&$\Lambda \rho_{\rm ext}=0$.
\end{tabular}
\end{quote}

Forgetting for the moment that the correlators are in general complex, a direct correspondence with the SDP standard form can be done as follows:
\beqa
F_0 \rightarrow \beta_{0,0} M_0\otimes M_0 + \sum_{j=1}^{2^{2m-1}-1}\beta_{0,j}(M_0\otimes M_j + M_j\otimes M_0)\\
F_i \rightarrow \left\{\begin{array}{l}
						M_j\otimes M_j, \;\; 1\le j \le 2^{2m-1}-1\\
						M_j\otimes M_k + M_k\otimes M_j, \;\; 1\le j< k\le 2^{2m-1}-1
						\end{array}\right.,
\eeqa 
with $\bb x=(\{\beta_{j,j}\}_{1\le j \le 2^{2m-1}-1}, \{\beta_{j,k}\}_{1\le j< k\le 2^{2m-1}-1})^T$.

The equality constraint then reads:
\beq 
\sum_{i=1}^{2^{2m -3}(2^{2m}-2)} x_i \Lambda F_i = -\Lambda F_0.
\eeq
This can be cast in the form $A \bb x = \bb b$, by writing $A=([\Lambda F_i]_{1\le i \le 2^{2m -3}(2^{2m}-2)})$, and $\bb b = -[\Lambda F_0]$. The notation $[G]$ means a column matrix constructed by stacking each column of $G$ below each other.

To finish the construction we must take into account that we are possibly dealing with complex values. To do that we employ the following well-known result.
\begin{lemma}
\label{prop:realMap} Let $Z=Z^\dagger\in \Cx^{d\times d}$. Then $Z\geq  0$ iff $\cc T(Z) =\left(\begin{array}{cc}
Re(Z)&-Im(Z)\\
Im(Z)&Re(Z)
\end{array}\right)$.
\end{lemma}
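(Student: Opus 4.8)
I read the statement in its only sensible form: one sets $\cc T(Z) := \left(\begin{array}{cc} \mathrm{Re}(Z) & -\mathrm{Im}(Z) \\ \mathrm{Im}(Z) & \mathrm{Re}(Z) \end{array}\right)$ and the claim to be proved is $Z \geq 0 \iff \cc T(Z) \geq 0$. The plan is to prove this by producing an explicit linear correspondence between the quadratic form of $Z$ on $\Cx^d$ and the quadratic form of $\cc T(Z)$ on $\Rl^{2d}$, so that the two forms take exactly the same set of values.

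First I would record the structural facts. Write $Z = R + \ii S$ with $R = \mathrm{Re}(Z)$ and $S = \mathrm{Im}(Z)$ real matrices. Hermiticity $Z = Z^\dagger$ is equivalent to $R^T = R$ and $S^T = -S$. In particular the $(2,1)$ block of $\cc T(Z)$ is the transpose of its $(1,2)$ block, so $\cc T(Z)$ is a genuine real symmetric matrix and asking whether it is positive semidefinite is meaningful.

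The core of the argument is the identity
\beq
w^T\, \cc T(Z)\, w \;=\; v^\dagger Z v,
\eeq
valid for every $v = x + \ii y \in \Cx^d$ (with $x,y \in \Rl^d$) and the associated real vector $w = \left(\begin{array}{c} x \\ y \end{array}\right) \in \Rl^{2d}$. I would establish it by expanding both sides: a short computation shows each side equals $x^T R x + y^T R y - 2\, x^T S y$, where the antisymmetry of $S$ kills the terms $x^T S x$ and $y^T S y$ and merges $-x^T S y$ with $y^T S x = -x^T S y$, while the symmetry of $R$ guarantees (as it must, $Z$ being Hermitian) that the would-be imaginary part of $v^\dagger Z v$ vanishes. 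Because $v \mapsto w$ is a bijection $\Cx^d \to \Rl^{2d}$, the identity shows that $\{\, v^\dagger Z v : v \in \Cx^d \,\} = \{\, w^T \cc T(Z) w : w \in \Rl^{2d} \,\}$, hence one of these sets is contained in $[0,\infty)$ iff the other is, which is exactly $Z \geq 0 \iff \cc T(Z) \geq 0$.

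I do not expect a genuine obstacle here; the only thing requiring care is the sign bookkeeping in expanding $v^\dagger Z v = (x^T - \ii y^T)(R + \ii S)(x + \ii y)$ and confirming the cancellation of the imaginary part. Should one prefer a spectral proof, the same facts give an alternative: if $Z v = \lambda v$ with $\lambda \in \Rl$, then both $\left(\begin{array}{c} x \\ y \end{array}\right)$ and $\left(\begin{array}{c} -y \\ x \end{array}\right)$ are eigenvectors of $\cc T(Z)$ for the eigenvalue $\lambda$, so the spectrum of $\cc T(Z)$ is that of $Z$ with every multiplicity doubled and the equivalence of the two positivity conditions is immediate. I would nonetheless present the quadratic-form version as the main proof, since it is shorter and sidesteps any appeal to diagonalizability.
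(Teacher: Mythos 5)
Your proof is correct. You correctly note that the statement as printed is missing the condition ``$\cc T(Z)\geq 0$'' on the right-hand side of the ``iff''; with that read in, your quadratic-form identity $w^T \cc T(Z) w = v^\dagger Z v$ for $v=x+\ii y$, $w=(x,y)^T$ is exactly right (both sides equal $x^T R x + y^T R y - 2x^T S y$, using $R^T=R$ and $S^T=-S$), and since $v\mapsto w$ is a bijection $\Cx^d\to\Rl^{2d}$ the equivalence of the two positivity conditions follows immediately. The paper instead gives the spectral argument that you relegate to a remark: if $Zv=\lambda v$ with $v=\bb x+\ii\bb y$ then $(\bb x,\bb y)^T$ is an eigenvector of $\cc T(Z)$ with the same (real) eigenvalue, and conversely, so the spectra coincide (with multiplicities doubled) and positivity transfers. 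The two routes are essentially equally short here; your quadratic-form version has the minor advantages of not invoking diagonalizability and of proving the slightly stronger statement that the two forms have identical ranges, while the paper's eigenvector correspondence makes the doubling of the spectrum explicit, which is occasionally useful when one cares about conditioning of the real SDP rather than just feasibility. Either is acceptable.
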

\begin{proof}
$(\Rightarrow)$ Let $\bb z = \bb x + \ii\bb y$ be an eigenvector of $Z$ with eigenvalue $\lambda$. Then  it is easy to see that $(\bb x,\bb y)^T$ is an eigenvector of $\cc T(Z)$ with eigenvalue $\lambda$.

$(\Leftarrow)$ By the same token, let $(\bb x,\bb y)$ be an eigenvector of $\cc T(Z)$ with eigenvalue $\lambda$. Then $\bb z= \bb x + \ii \bb y$ is an eigenvector of $Z$ with eigenvalue $\lambda$.
\end{proof}
The final dimension of each $F_i$ is then $2^{2m+1}\times 2^{2m+1}$, and there are $2^{2m -3}(2^{2m}-2)$ of such matrices.
For the equality constraint we write 
\beq
\left(\begin{array}{c}
Re(A)\\
Im(A)
\end{array}
\right) \bb x = \left(\begin{array}{c}
					Re(\bb b)\\
					Im(\bb b)
					\end{array}\right).
\eeq
The final matrix $(Re(A),Im(A))^T$ has dimension $2^{4m+1} \times 2^{2m -3}(2^{2m}-2)$. Most likely the rank of this matrix is not equal to the number of rows. Therefore, before  plugging it into the SDP one must evaluate its reduced-row form (echelon form), and remove the all-zero lines.

If one wishes to include the PPT test, as in Program~\ref{prog:PPTSDP}, this can be done by rewriting the two positivity constraints, $\rho_{\rm ext}\geq 0$ and $\rho_{\rm ext}^{T_1}\geq 0$, as a single one, $\rho_{\rm ext}\oplus \rho_{\rm ext}^{T_1}\geq 0$. One thus need to map each complex $F_i$ in the positivity constraint to $F_i^\prime = F_i\oplus F_i^{T_1}$. The final dimension of each real $F_i^\prime$ is then  $2^{2m+2}\times 2^{2m+2}$. The equality constraint remains the same.

\section*{References}
\addcontentsline{toc}{section}{References}
\bibliographystyle{unsrt}
\bibliography{noisyFLO}

\end{document}